\newtheorem{theorem}{Theorem}
\newtheorem{claim}{Claim}
\newtheorem{conjecture}{Conjecture}
\title{\LARGE \bf
On Distributed Storage  Allocations of Large Files\\ for Maximum  Service  Rate
}
\author{Pei Peng$^{1}$ and Emina Soljanin$^{1}$
\thanks{$^{1}$Pei Peng and Emina Soljanin are with the Department of Electrical and Computer Engineering, Rutgers University, Piscataway, NJ 08854 USA.
        {\tt\small pei.peng@rutgers.edu; emina.soljanin@rutgers.edu}}%
}
\begin{document}
\maketitle

\begin{abstract}
Allocation of (redundant) file chunks throughout a distributed storage system affects important performance metrics such as the probability of file recovery, data download time, or the service rate of the system under a given data access model. 
This paper is concerned with the service rate under the assumption that the stored data is large and its download time is not negligible.
We focus on quasi-uniform storage allocations and provide a service rate analysis for two common data access models. We find that the optimal allocation varies  in  accordance  with  different  system  parameters. This was not the case under the assumption that  the  download  time  does  not scale with the size of data, where the
minimal spreading allocation was previously found to be universally  optimal. 

\end{abstract}

\section{INTRODUCTION}
Distributed storage systems (DSSs) are, in various guises, an integral part of different computing and content providing environments such as cloud data centers, caching edge networks, and more recently, fog systems. Their purpose is to ensure reliable storage and/or quick access of data by end users or computing processes. Today, both goals are being increasingly addressed by storing data redundantly, either by replication or erasure coding. This paper is concerned with allocations of redundant data chunks throughout a DSS that ensure maximum data access service rate.

Most of the work on data access in DSSs is concerned with the download latency (see e.g., \cite{joshi2012coding, chen2014queueing,tandon2014new,kadhe15availability} and references therein). It has recently been recognized, that another important metric that measures the availability of the stored data is the service rate \cite{noori2016storage,ServiceRate:AktasAJJKMMS}. Maximizing service rate (or the throughput) of a distributed system helps support a large number of simultaneous system users. Rate-optimal strategies are also latency-optimal in high traffic. Thus, maximizing the service rate also reduces the latency experienced by users  in particular in highly contending scenarios. 

This paper adopts a DSS model originally proposed in \cite{leong2011distributed}. In this model, a file is split into multiple chunks, and (replication or coded) redundancy is introduced at some fixed level determined by the storage budget that the DSS has for the file. This total storage is the only constraint, and there is no limit on how many chunks a particular node can store as long as it stays within the budget. Attempts to data retrieval are done according to some limited access models.

Several studies have looked into how to allocate redundant chunks of data over the storage nodes, focusing mostly on optimizing two DSS performance metrics \cite{leong2011distributed,leong2012distributed,Sardari_Allocation_2010,hong2014asymptotic}. One of them is the {\it probability of successful data recovery} when only a subset of (possibly failed) nodes are  accessed, and the other is the {\it average download time}  when a set of nodes from which the file can certainly be recovered is accessed.  Finding these quantities has shown to be quite challenging, and optimal allocations are known only in some special cases. Some versions of this problem are related to a long standing conjecture by Erd\H{o}s on the maximum number of edges in a uniform hypergraph \cite{matching:AlonFHRRS12}.

In general, both measures are of interest and should be simultaneously taken into account. Often increasing the chance of successfully downloading a file while desirable should not come at the cost of intolerable delivery delay. Moreover, in practice, we may often want to partially sacrifice a successful but tardy data delivery to some users in order to ensure that other users, that can receive the data, are indeed served fast. 

Note that, depending on the allocation, some subsets of nodes may not contain enough file chunks between them to ensure data recovery, and accessing them will result in a zero system's service  rate. On the other hand, again depending on the allocation, some subsets of nodes will contain redundant file chunks, and that redundancy (superfluous for file recovery) can be exploited to increase the service rate. 
These issues were first addressed in \cite{noori2016storage}, where a non intuitive conclusion was reached that the allocation that maximizes the probability of successful data recovery is often not the one that maximizes the average service rate.  
%

Depending on the number of storage nodes and the allocated redundancy budget, it may be beneficial for recovery to maximally spread the redundant file chunks
over the nodes, whereas concentrating the redundant chunks (minimum spreading) may increase the expected service rate. 
The work of \cite{noori2016storage} assumes that the download time 
is random because of independent workload fluctuations inherent to the system, and does not depend on the size of the data being downloaded. We here assume that the stored data is large, its download time is not negligible and scales with the size of the data. We find that the optimal allocation varies  in  accordance  with  different  system  parameters, which was not the case in \cite{noori2016storage}, where the minimal spreading allocation was found to  be  universally  optimal.


The paper is organized as follows. A DSS model and problem formulation are given in Sec.~\ref{Sec:model}. Service rate analysis considering the effect of access model and the success of serving a request is presented in Sec.~\ref{Sec:analysis}.  Some numerical examples and further discussion are provided in Sec.~\ref{Sec:simulation}.

\section{SYSTEM MODEL}
\label{Sec:model}
\begin{figure}[t]
    \centering
    \includegraphics[width=0.4\textwidth]{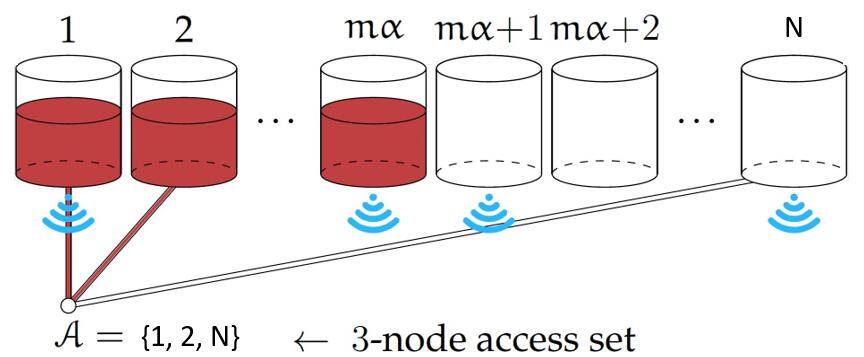}
    \caption{
    A DSS of $N$ nodes where each node stores either $k/\alpha$ or $0$ data blocks of interest to some users, and thus only $\varphi=\alpha m$ nodes contain data blocks. The WiFi sign indicates that the node has enough  available capacity to serve the user. Note that that is independent of whether or not the node has been accessed or has the data. Here, three nodes are successfully accessed, but only two of them have (coded) data blocks. One of the accessed node has data blocks but is not able to serve the user.}
    \label{fig:DSS}
\end{figure}

\subsection{Storage Model\label{sec:StorageModel}}
A file consisting of $k$ blocks is to be redundantly stored over a  DSS with $N$ storage nodes.  To protect the data against nodes' failure or unavailability, the file is encoded by an MDS code into $mk$ ($m \in \mathbb{N}$) encoded blocks so that any $k$ of them are sufficient to recover the original file. The $mk$ encoded blocks are partitioned into $N$ subsets $\mathcal{S}_i$'s for $i  \in \{1,\dots, N\}$ where $| {\mathcal S}_i| = s_i$, and thus $\sum_{i = 1}^N s_i = mk$.   We refer to such partitioning an {\it allocation}. The $s_i$ blocks in $\mathcal{S}_i$ are stored at the storage node $i$. 
 
We are concerned with quasi-symmetric  allocations \cite{Sardari_Allocation_2010}, where a node can either store a constant number of  blocks $k/\alpha$($\alpha \in \mathbb{N}$) or no blocks at all.
(Dealing with a general storage allocation optimization problem is computationally difficult for a general setup, see e.g., \cite{leong2012distributed}.)
We will refer to such allocations as $\alpha$ quasi-symmetric allocation. Fig.~\ref{fig:DSS} depicts an example quasi-symmetric allocation on $N$ nodes.

We refer to a quasi-symmetric allocation where $\alpha = 1$ as \emph{minimal spreading} \cite{leong2012distributed}. Note that for a minimal spreading allocation, the $k$ file blocks are simply replicated over some $m$ storage nodes. Similarly, an allocation with $\alpha = N/m$ will be referred to as a \emph{maximal spreading} allocation since the file chunks are spread over all $N$ nodes in the system.

\subsection{Data Access and Delivery Models\label{sec:AccessModel}}
\noindent\ul{Fixed-size Access:}
In this model, the download request is forwarded to a random $r$-node subset of the $N$ storage nodes \cite{leong2012distributed, Sardari_Allocation_2010}. Therefore, the access to a given $r$-subset $\mathcal{A}$ results in the successful recovery of the data iff the nodes in $\mathcal{A}$ jointly contain at least $k$ coded blocks: 
\begin{equation}\label{eq: Recovery cond}
\sum_{i \in \mathcal{A}} s_i \ge k.
\end{equation}
Note that for $\alpha > r$, it is impossible to recover the data. Thus, we only consider the $1 \leq \alpha \leq r$ case.
\\[1ex]
\ul{Probabilistic Access:}
In this model, the download request is forwarded to all nodes that store the data. However, the request to a node fails with probability $p$. Assuming that $\mathcal{A}$ represents the set of nodes that are successfully accessed, the condition for data recovery is also (\ref{eq: Recovery cond}). In this case, $1 \leq \alpha \leq \frac{N}{m}$. In this access model, $|\mathcal{A}|$ is a Binomial random number between 1 and $N$.
\\[1ex]
 Regardless of the access model, for an accessed subset of nodes $\mathcal{A}$, we denote the number of nodes containing data by $\varphi(\mathcal{A})$. For instance, in Fig.~\ref{fig:DSS}, three nodes 
($|\mathcal{A}| = \text{3}$) are accessed while only $\varphi(\mathcal{A}) = 2$ of them have data. For an $\alpha$ quasi-symmetric allocation, data recovery from this subset is successful iff $\varphi(\mathcal{A}) \geq \alpha$. 
The probability of successful file recovery under an $\alpha$ allocation is, therefore, given by
\begin{equation}
P_s(\alpha)=
\sum_{\mathcal{A}:\; \sum_{i \in \mathcal{A}} s_i \ge k}
P(\mathcal{A})
\label{eq:prob}
\end{equation}
where $P(\mathcal{A})$ is the probability of acccesing $\mathcal{A}$. Note that the sum goes over all sets $\mathcal{A}$ that satisfy the condition \eqref{eq: Recovery cond}.
  
\subsection{Service Models}
We assume a request is simultaneously served by all nodes in the accessed set $\mathcal{A}$ that contain data, where each node takes some i.i.d.\ random time to deliver its blocks. In the fixed-size access model $|{\mathcal A}| = r$ while in the probabilistic access model, $|\mathcal{A}|$ is a Binomial random variable between 1 and $N$. Note that the file can be reconstructed when the accessed nodes jointly deliver $k$ encoded blocks. We here limit our study to the case where a node has to deliver all its blocks for the download to count. 

For an $\alpha$ quasi-symmetric allocation, the download request can be served iff $\varphi(\mathcal{A}) \geq \alpha$, and as soon as all blocks are downloaded from any $\alpha$ out of the $\varphi(\mathcal{A})$ nodes with data. Therefore, the average download time
  $T_s(\alpha|\varphi(\mathcal{A}))$ is the $\alpha$-th order statistics of $\varphi(\mathcal{A})$ waiting times at the storage nodes.
\\[1ex]
\ul{Scaled Exponential Service:} In this model \cite{joshi2012coding}, a node delivers the first file block in some exponential random time, and  each subsequent block in the the same time. We assume that a node storing the whole file delivers all of its blocks in a random time exponentially distributed with the mean $1/\mu$.
It is easy to see that, equivalently, we can say that a node storing $1/\alpha$ fraction of the file delivers all of its blocks in the random time exponentially distributed with the mean $1/(\alpha\mu)$. (Recall
that in \cite{noori2016storage},
the download times are assumed exponential i.i.d.\
and independent of the size of the data being downloaded.)
For this model, we have
\begin{equation}
  T_s(\alpha|\varphi(\mathcal{A}))=\frac{1}{\alpha\mu}(H_{\varphi(\mathcal{A})}-H_{\varphi(\mathcal{A})-\alpha}),
\label{eq:time}  
\end{equation}
where $H_{\ell}=\sum^{\ell}_{i=1}1/i$ denotes the $\ell$-th 
harmonic number, and $1/(\alpha\mu)$ comes from the service rate scaling discussed above.
The corresponding service rate from set $\mathcal{A}$ (with $\varphi(\mathcal{A})>\alpha$ nodes containing data) is
\begin{equation}
\mu_{\alpha}(\mathcal{A})=\frac{1}{T_s(\alpha|\varphi(\mathcal{A}))}=\frac{\alpha\mu}{H_{\varphi(\mathcal{A})}-H_{\varphi(\mathcal{A})-\alpha}}.
    \label{eq:SetServiceRate}
\end{equation}
It is not hard to see that 
\begin{equation}
\mu \varphi(\mathcal{A})\ge \mu_s(\alpha|\varphi(\mathcal{A})) \ge \mu (\varphi(\mathcal{A})-\alpha+1)
\label{ieq:rate}
\end{equation}
\ul{Shifted Exponential Service:}
In this model \cite{joshi2012coding}, delivery consists of  two steps: first, the node takes an exponential random time to process the request; second, the node takes a constant time proportional to the number of blocks to deliver them to the user. Therefore, the two step delivery time for a node storing $1/\alpha$ fraction of the file can be modeled by the shifted exponential distribution with rate $\mu$ and the shift parameter $\Delta/\alpha$.
For this model, we have
\begin{equation}
  T_s(\alpha|\varphi(\mathcal{A}))=\frac{\Delta}{\alpha}+\frac{1}{\mu}(H_{\varphi(\mathcal{A})}-H_{\varphi(\mathcal{A})-\alpha}),
\label{eq:shifttime}  
\end{equation}
where $\frac{\Delta}{\alpha}$ comes from the service rate shifting discussed above. The corresponding service rate from set $\mathcal{A}$ is
\begin{equation}
\mu_{\alpha}(\mathcal{A})=\frac{\alpha\mu}{\Delta\mu+\alpha(H_{\varphi(\mathcal{A})}-H_{\varphi(\mathcal{A})-\alpha})}.
    \label{eq:SetServiceRateshift}
\end{equation}
As $\varphi(\mathcal{A}) \in [\alpha,\alpha m]$, it is not hard to see that 
\begin{equation}
\frac{\mu \varphi(\mathcal{A})}{\Delta \mu +\alpha}\ge \mu_s(\alpha|\varphi(\mathcal{A})) \ge \frac{\alpha\mu (\varphi(\mathcal{A})-\alpha+1)}{\Delta \mu(\alpha m-\alpha+1)+\alpha^2}
\label{ieq:rateshift}
\end{equation}
\ul{DSS Service Rate:}
Under an $\alpha$-allocation, the DSS service rate is given by
\begin{equation}
\mu_s(\alpha) = 
\sum_{\mathcal{A}:\; \sum_{i \in \mathcal{A}} s_i \ge k}
P(\mathcal{A})\mu_{\alpha}(\mathcal{A})
\label{eq:service}
\end{equation}
where $\mu_{\alpha}(\mathcal{A})$ is the service rate when the set of accessed nodes is $\mathcal{A}$, given by \eqref{eq:SetServiceRate} or \eqref{eq:SetServiceRateshift},
and $P(\mathcal{A})$ is the probability of accessing set $\mathcal{A}$, given by \eqref{eq:prob}. 
\subsection{Preview of the Results and Future Work}
We argue that finding $\alpha$ that maximizes \eqref{eq:service} is hard.
We prove that $\mu_s(\alpha)$ is not always maximal for $\alpha=1$, and we specify two system parameter regions where 1) $\mu_s(\alpha)<\mu_s(1)$ and 2) $\mu_s(\alpha)>\mu_s(1)$. We numerically analyze the optimal storage allocation. We find that performance metrics $\mu_s(\alpha)$ (the service rate) and $P_s(\alpha)$ (the probability of successful recovery) may exhibit different trends with changing allocations.
We make conjectures on how optimal storage allocation changes with the parameter $r$, $p$, and $m$, which we will try to prove in future work.

\section{SYSTEM PERFORMANCE ANALYSIS}
\label{Sec:analysis}
\subsection{Fixed-size Access and Scaled Exponential Service}
\begin{claim}
\label{cor:fixed}
For fixed-size access model under scaled exponential distribution,
the DSS service rate \eqref{eq:service} becomes
\[
{\small
    \mu_s(\alpha)=\frac{\mu\alpha}{\binom N r} \sum^{\min(r,\alpha m)}_{\varphi=\alpha}\frac{1}{H_\varphi-H_{\varphi-\alpha}} \binom {\alpha m}{\varphi} \binom {N-\alpha m}{r-\varphi}}
\]
    \label{eq:scaled}
\end{claim}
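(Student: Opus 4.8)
The plan is to collapse the sum in \eqref{eq:service}, which ranges over all recovering subsets $\mathcal{A}$, into a single sum indexed by $\varphi = \varphi(\mathcal{A})$, the number of accessed nodes holding data. First I would record three facts about the fixed-size access model under an $\alpha$ quasi-symmetric allocation. (i) Because the request is forwarded to a uniformly random $r$-subset of the $N$ nodes, every admissible $\mathcal{A}$ has $P(\mathcal{A}) = 1/\binom{N}{r}$. (ii) Since each data-holding node stores $k/\alpha$ blocks, the recovery condition \eqref{eq: Recovery cond} is equivalent to $\varphi(\mathcal{A}) \ge \alpha$, as already noted in Sec.~\ref{sec:AccessModel}. (iii) By \eqref{eq:SetServiceRate}, the per-set rate $\mu_{\alpha}(\mathcal{A})$ depends on $\mathcal{A}$ only through $\varphi(\mathcal{A})$, so all sets sharing the same $\varphi$ contribute identically.

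Next I would count subsets by their $\varphi$-value. The allocation places data on exactly $\alpha m$ of the $N$ nodes, since $mk$ encoded blocks are stored at $k/\alpha$ blocks per data node. Hence the number of $r$-subsets $\mathcal{A}$ with exactly $\varphi(\mathcal{A}) = \varphi$ is $\binom{\alpha m}{\varphi}\binom{N-\alpha m}{r-\varphi}$: one chooses $\varphi$ of the $\alpha m$ data nodes and the remaining $r-\varphi$ of the $N-\alpha m$ empty nodes. Substituting facts (i)--(iii) and this count into \eqref{eq:service}, and grouping the outer sum by $\varphi$, gives
\[
\mu_s(\alpha) = \frac{\mu\alpha}{\binom{N}{r}} \sum_{\varphi}\frac{1}{H_\varphi - H_{\varphi-\alpha}}\binom{\alpha m}{\varphi}\binom{N-\alpha m}{r-\varphi}.
\]

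Finally I would pin down the summation range. The lower limit $\varphi=\alpha$ comes from fact (ii), and the upper limit is $\min(r,\alpha m)$ because one cannot access more than $r$ nodes, nor more than the $\alpha m$ data nodes that exist; outside this window the binomial factors vanish, so the truncation is in fact automatic. This reproduces the claimed formula exactly. The argument is essentially bookkeeping, and the only step meriting care is the verification in (iii) that the set service rate is a function of $\varphi$ alone, which legitimizes the grouping and is immediate from \eqref{eq:SetServiceRate}.
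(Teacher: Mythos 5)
Your derivation is correct and is exactly the bookkeeping the paper intends: the paper offers no written proof beyond ``the claim follows from the assertions in Sec.~II,'' and your steps (uniform $1/\binom{N}{r}$ access probability, $\alpha m$ data nodes, per-set rate depending only on $\varphi$, and the hypergeometric count with range $\alpha \le \varphi \le \min(r,\alpha m)$) are precisely those assertions made explicit. No discrepancy with the paper's approach.
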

The claim follows from the assertions in Sec.~\ref{Sec:model}.
We see that finding $\alpha$ that maximizes the $\mu_s(\alpha)$ is hard.
Instead, we prove below that $\alpha=1$ is not always optimal.
\begin{theorem}
 For the fixed-size access model, when the waiting time of each node follows scaled exponential distribution, the optimal $\mu_s(\alpha)$ isn't always reached at $\alpha=1$.
 \label{Le:fixed}
\end{theorem}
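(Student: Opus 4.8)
The plan is to exhibit a single counterexample — a specific choice of parameters $(N, r, m)$ for which $\mu_s(2) > \mu_s(1)$ — since the theorem only asserts that $\alpha=1$ is \emph{not always} optimal. This is an existence statement, so one explicit witness suffices, and the cleanest route is to compute $\mu_s(1)$ and $\mu_s(2)$ directly from the formula in Claim~\ref{cor:fixed} and compare.

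First I would specialize the formula. For $\alpha=1$ the harmonic difference is $H_\varphi - H_{\varphi-1} = 1/\varphi$, so the summand simplifies to $\varphi \binom{m}{\varphi}\binom{N-m}{r-\varphi}$, giving a closed form for $\mu_s(1)$ in terms of $\mu$, $N$, $r$, $m$. For $\alpha=2$ the term $H_\varphi - H_{\varphi-2} = 1/\varphi + 1/(\varphi-1)$ appears, and the sum runs over $\varphi$ from $2$ to $\min(r, 2m)$ with binomials $\binom{2m}{\varphi}\binom{N-2m}{r-\varphi}$; the extra factor $\alpha=2$ out front helps boost $\mu_s(2)$. The strategy is to pick $N$ large relative to $r$ and a small $m$ so that the dominant contribution comes from the smallest feasible $\varphi$ (namely $\varphi=\alpha$), where the service-rate bounds in \eqref{ieq:rate} show the $\alpha$-allocation enjoys a rate near $\mu(\varphi-\alpha+1)=\mu$ but pays off through the multiplicative $\alpha$ in front and through accessing more data-carrying nodes.

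The key comparison step is to show that the aggregate effect of the leading factor $\alpha\mu$ and the larger pool of $\alpha m$ data-bearing nodes (which raises the chance that enough nodes are hit) outweighs the harmonic penalty $H_\varphi - H_{\varphi-\alpha}$, which grows only logarithmically. I would either plug concrete integers (e.g.\ small $m$, moderate $r$, large $N$) into both closed forms and verify $\mu_s(2) > \mu_s(1)$ by direct arithmetic, or pursue an asymptotic argument: as $N\to\infty$ with $r, m$ fixed, the probability of accessing $\varphi$ data nodes behaves like a Poisson/hypergeometric tail concentrated at small $\varphi$, so $\mu_s(\alpha) \approx$ (leading term), and comparing the two leading terms isolates the regime where $\alpha=2$ wins.

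\textbf{The main obstacle} is choosing parameters that make the inequality provable cleanly rather than merely numerically plausible. The two allocations weight different ranges of $\varphi$ through incompatible binomial coefficients ($\binom{m}{\varphi}$ versus $\binom{2m}{\varphi}$), so a term-by-term comparison does not work directly; I expect to need a careful choice where one or two dominant terms control each sum, reducing the claim to a finite, checkable numerical inequality. An honest and self-contained proof would simply name the explicit $(N,r,m)$ triple and display the two rational numbers $\mu_s(1)/\mu$ and $\mu_s(2)/\mu$ side by side, confirming the strict inequality; the surrounding numerical section (Sec.~\ref{Sec:simulation}) presumably supplies exactly such a witness, so the rigorous content reduces to verifying that one computation.
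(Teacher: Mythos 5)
Your overall strategy --- exhibiting one explicit parameter choice with $\mu_s(\alpha)>\mu_s(1)$ for some $\alpha>1$ --- is logically sufficient for this existence statement and is genuinely different from (and simpler than) the paper's argument, which uses the bounds \eqref{ieq:rate} together with Vandermonde's convolution to carve out two whole regions of $r$: one where $\mu_s(\alpha)<\mu_s(1)$ and one where $\mu_s(\alpha)>\mu_s(1)$. However, your proposal has two genuine gaps. First, it never actually names the witness: everything is deferred to ``plug concrete integers'' or ``pursue an asymptotic argument,'' so the crucial verification is missing. Second, and more seriously, the regime you propose to search in is the wrong one. With $N$ large relative to $r$ and $m$ small, the hypergeometric weight on the event $\varphi\ge\alpha$ scales like $\binom{r}{\alpha}(\alpha m/N)^{\alpha}$, so $\mu_s(2)/\mu_s(1)=O(mr/N)\to 0$ and $\alpha=1$ wins there --- this is exactly the paper's region 1, where it shows $\mu_s(\alpha)<\mu_s(1)$ for all $r<1+(N-1)\big/\sqrt[\alpha-1]{\alpha\binom{\alpha m-1}{\alpha-1}}$. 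Your premise that the larger pool of $\alpha m$ data-bearing nodes ``raises the chance that enough nodes are hit'' fails in that regime, because the recovery threshold also scales with $\alpha$: you need $\alpha$ hits out of $\alpha m$ nodes, not one hit out of $m$.

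The witness lives at the opposite extreme, $r$ close to $N$ (the paper's region 2). Take $r=N$ and $m=2$: then $\varphi=\alpha m$ deterministically, so $\mu_s(1)=\mu/(H_2-H_1)=2\mu$ while $\mu_s(2)=2\mu/(H_4-H_2)=2\mu/\bigl(\tfrac{1}{4}+\tfrac{1}{3}\bigr)=\tfrac{24}{7}\mu>2\mu$. Stating this one computation (which is essentially the content of the paper's Claim~\ref{explainfix}) would complete your proof of the theorem as stated. What the paper's more elaborate two-region analysis buys beyond this is the qualitative information feeding Conjecture~\ref{hyp:fix1}, namely that $\alpha=1$ is favored when $r$ is small and disfavored when $r$ is large.
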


\begin{proof}
To prove $\alpha=1$ is not the optimal choice for all $r$, we will show that 1) there is a region of $r$ values s.t.\ $\mu_s(\alpha)<\mu_s(1)$ and 2) there is a region of $r$ values s.t.\ $\mu_s(\alpha)>\mu_s(1)$ for $\alpha>1$.
\\[1ex]
1) We consider $\mu_s(\alpha)<\mu_s(1)$ as follows:
According to \eqref{ieq:rate},
\begin{small}
\begin{align*}
    \mu_s(\alpha)\!&\!<\frac{\mu}{\binom{N}{r}}\sum^{\min(r,\alpha m)}_{\varphi=\alpha}\varphi \binom{\alpha m}{\varphi}\binom{N-\alpha m}{r-\varphi}\\
    &=\frac{\mu\alpha m}{\binom{N}{r}}\sum^{\min(r,\alpha m)}_{\varphi=\alpha} \binom{\alpha m-1}{\varphi-1}\binom{N-\alpha m}{r-\varphi}\\
    &=\frac{\mu\alpha m}{\binom{N}{r}}\!\sum^{\min(r,\alpha m)}_{\varphi=\alpha}\! \prod^{\alpha-2}_{i=0}\frac{\alpha m-1-i}{\varphi-1-i} \binom{\alpha m-\alpha}{\varphi-\alpha}\binom{N-\alpha m}{r-\varphi}
    \end{align*}
Since\ $\varphi$ goes from $\alpha$ to $\alpha m$, we further have 
    \begin{align*}
    \mu_s(\alpha)
    &<\frac{\mu\alpha m}{\binom{N}{r}}\sum^{\min(r,\alpha m)}_{\varphi=\alpha}(\prod^{\alpha-2}_{i=0}\frac{\alpha m-1-i}{\alpha-1-i}) \\ &\binom{\alpha m-\alpha}{\varphi-\alpha}\binom{N-\alpha m}{r-\varphi}\\
    &=\frac{\mu\alpha m\binom{\alpha m-1}{\alpha-1}}{\binom{N}{r}}\sum^{\min(r,\alpha m)}_{\varphi=\alpha} \binom{\alpha m-\alpha}{\varphi-\alpha}\binom{N-\alpha m}{r-\varphi}\\
    &=\frac{\mu\alpha m\binom{\alpha m-1}{\alpha-1} \binom{N-\alpha}{r-\alpha}}{\binom{N}{r}} ~~~~
    \text{(by Vandermonde's\ convolution)}
\end{align*}
\end{small}
As we know,
\begin{small}
\begin{align*}
    \mu_s(1)&=\sum^{m}_{\varphi=1}\mu_s(1|\varphi)\frac{\binom{m}{\varphi}\binom{N-m}{r-\varphi}}{\binom{N}{r}}\\
    &=\frac{\mu}{\binom{N}{r}}\sum^{m}_{\varphi=1}\varphi\binom{m}{\varphi}\binom{N-m}{r-\varphi}=\frac{\mu m\binom{N-1}{r-1}}{\binom{N}{r}}
\end{align*}
\end{small}
Then to satisfy $\mu_s(\alpha)<\mu_s(1)$, we need
\begin{equation}
\begin{split}
   &\frac{\mu\alpha m\binom{\alpha m-1}{\alpha-1} \binom{N-\alpha}{r-\alpha}}{\binom{N}{r}}<\frac{\mu m\binom{N-1}{r-1}}{\binom{N}{r}}\\
   &\Leftrightarrow \alpha \binom{\alpha m-1}{\alpha-1}<\prod^{\alpha-2}_{i=0}\frac{N-1-i}{r-1-i}
\end{split}
\label{eq:shift-fixb}
\end{equation}
As $\frac{N-1-i}{r-1-i}<\frac{N-2-i}{r-2-i}$ for $N>r$, we have
$\prod^{\alpha-2}_{i=0}\frac{N-1-i}{r-1-i}>(\frac{N-1}{r-1})^{\alpha-1}$. Inequality \eqref{eq:shift-fixb} is true when
\begin{small}
\begin{align*}
    &\alpha \binom{\alpha m-1}{\alpha-1}<(\frac{N-1}{r-1})^{\alpha-1}
    \Leftrightarrow r<1+\frac{N-1}{\sqrt[\alpha-1]{\alpha \binom{\alpha m-1}{\alpha-1}}}
\end{align*}
\end{small}
\noindent Thus $\mu_s(\alpha)<\mu_s(1)$ for $r\in \Big[\alpha,1+\frac{N-1}{\sqrt[\alpha-1]{\alpha \binom{\alpha m-1}{\alpha-1}}}\Big)$.
\\[1ex]
2) We consider $\mu_s(\alpha)>\mu_s(1)$ as follows:
According to \eqref{ieq:rate},
\begin{small}
\begin{align*}
    \mu_s(\alpha)&>\frac{\mu}{\binom{N}{r}}\sum^{\min(r,\alpha m)}_{\varphi=\alpha}(\varphi-\alpha+1) \binom{\alpha m}{\varphi}\binom{N-\alpha m}{r-\varphi}\\
    &=\frac{\mu}{\binom{N}{r}}\sum^{\min(r,\alpha m)}_{\varphi=\alpha}(\varphi-\alpha+1) \prod^{\alpha-2}_{i=0}\frac{\alpha m-i}{\varphi-i}\\ &\binom{\alpha m-\alpha+1}{\varphi-\alpha+1}\binom{N-\alpha m}{r-\varphi}
    \end{align*}
 Since\ $\varphi$ goes from $\alpha$ to $\alpha m$, we further have 
    \begin{align*}
\mu_s(\alpha)
    &>\frac{\mu}{\binom{N}{r}}\sum^{\min(r,\alpha m)}_{\varphi=\alpha}(\varphi-\alpha+1)\binom{\alpha m-\alpha+1}{\varphi-\alpha+1}\binom{N-\alpha m}{r-\varphi}\\
    &=\frac{\mu(\alpha m-\alpha+1)\binom{N-\alpha}{r-\alpha}}{\binom{N}{r}}~~~
    \text{(Vandermonde's convolution)}
\end{align*}
\end{small}
As we know, $\mu_s(1)=\frac{\mu m\binom{N-1}{r-1}}{\binom{N}{r}}$
\\
Then to satisfy $\mu_s(\alpha)>\mu_s(1)$, we need
\begin{equation}
\begin{split}
   &\frac{\mu(\alpha m-\alpha+1)\binom{N-\alpha}{r-\alpha}}{\binom{N}{r}}>\frac{\mu m\binom{N-1}{r-1}}{\binom{N}{r}}\\
   &\Leftrightarrow \frac{\alpha m-\alpha +1}{m}>\prod^{\alpha-2}_{i=0}\frac{N-1-i}{r-1-i}
\end{split}
\label{eq:scaled-fix2}
\end{equation}
As $\frac{N-1-i}{r-1-i}<\frac{N-2-i}{r-2-i}$ for $N>r$, we have
$\prod^{\alpha-2}_{i=0}\frac{N-1-i}{r-1-i}<(\frac{N-\alpha+1}{r-\alpha+1})^{\alpha-1}$. Inequality \eqref{eq:scaled-fix2} is true when
\begin{small}
\begin{align*}
    &\frac{\alpha m-\alpha +1}{m}>(\frac{N-\alpha+1}{r-\alpha+1})^{\alpha-1}\\
    &\Leftrightarrow r>\sqrt[\alpha-1]{\frac{m}{\alpha m-\alpha +1}}(N-\alpha+1)+\alpha-1
\end{align*}
\end{small}

Thus $\mu_s(\alpha)>\mu_s(1)$  when $r\in \Big(\sqrt[\alpha-1]{\frac{m}{\alpha m-\alpha +1}}(N-\alpha+1)+\alpha-1,N\Big].$
\end{proof}

In the proof above, we find two regions of $r$ which can show when $\mu_s(1)$ reaches the maximum. Here we give some examples to analyze these two regions. Let's give the parameter set as $(N,m,\mu,\alpha)$. When the parameter set is (30,2,1,4), we can get two regions [4,6.5] for $\mu_s(\alpha)<\mu_s(1)$, and [22.9,30] for $\mu_s(\alpha)>\mu_s(1)$, and both regions are exist. But there is a gap between the two regions, which means when $r$ is in (6.5,22.9), we can not decide whether $\mu_s(1)$ is maximum or not. The big gap appears because of the bounds we used in proof are not tight enough. When the parameter set is (30,3,1,5), we can get another two regions [4,4.4] for $\mu_s(\alpha)<\mu_s(1)$ and [22.7,30] for $\mu_s(\alpha)>\mu_s(1)$, and the first region is not exist for $r\ge\alpha$. 

Although these two regions can not help us to make a decision under any parameter set, they can still tell us something more about the system model in Conjecture \ref{hyp:fix1}. 
\begin{conjecture}
When $r$ is small, $\mu_s(1)$ is more likely to be the maximum; When $r$ is large, the maximum $\mu_s(\alpha)$ is not at 1, and the optimal $\alpha$ is increasing with $r$.
\label{hyp:fix1}
\end{conjecture}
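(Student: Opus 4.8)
The plan is to recast the conjecture as a statement about the maximizer $\alpha^*(r)=\arg\max_{1\le\alpha\le r}\mu_s(\alpha)$ and to prove three facts: (i) $\alpha^*(r)=1$ for all sufficiently small $r$; (ii) $\alpha^*(N)=N/m$, the maximal-spreading value; and (iii) $\alpha^*(r)$ is nondecreasing in $r$. Together these give the small-$r$ claim, the large-$r$ claim, and the asserted monotonicity. The starting point is to read Claim~\ref{cor:fixed} as an expectation,
\begin{equation*}
\mu_s(\alpha)=\sum_{\varphi=\alpha}^{\min(r,\alpha m)} h_\alpha(\varphi\mid r)\,g_\alpha(\varphi),\qquad g_\alpha(\varphi)=\frac{\alpha\mu}{H_\varphi-H_{\varphi-\alpha}},
\end{equation*}
where $h_\alpha(\cdot\mid r)$ is the hypergeometric law of the number of data-bearing nodes among the $r$ accessed (out of $N$, of which $\alpha m$ carry data), and $g_\alpha$ is extended by $g_\alpha(\varphi)=0$ for $\varphi<\alpha$. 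Since $H_\varphi-H_{\varphi-\alpha}=\sum_{j=\varphi-\alpha+1}^{\varphi}1/j$ decreases in $\varphi$, the weight $g_\alpha$ is nondecreasing in $\varphi$; and as $r$ grows the law $h_\alpha(\cdot\mid r)$ stochastically increases. This already yields that $\mu_s(\alpha)$ is increasing in $r$ for each fixed $\alpha$, which is the qualitative engine behind the conjecture.

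For the endpoint $r=N$ we have $\varphi=\alpha m$ with probability one, so $\mu_s(\alpha)=\alpha\mu/D_\alpha$ with $D_\alpha:=H_{\alpha m}-H_{\alpha m-\alpha}$. Here $D_\alpha$ is increasing in $\alpha$ and bounded above by $\ln\frac{m}{m-1}$, so the one-step inequality $(\alpha+1)/D_{\alpha+1}>\alpha/D_\alpha$ reduces to $D_{\alpha+1}/D_\alpha<(\alpha+1)/\alpha$, immediate from the boundedness of $D_\alpha$; hence the maximum at $r=N$ is attained at the largest feasible $\alpha=N/m$. For small $r$ I would sharpen the ``region~1'' computation already carried out inside the proof of Theorem~\ref{Le:fixed}: there $\mu_s(\alpha)<\mu_s(1)$ was shown on an interval starting at $r=\alpha$, and the aim is to replace the loose bound \eqref{ieq:rate} by the exact summand $g_\alpha(\varphi)$ so that, for every $\alpha>1$ simultaneously, the interval covers a common low-$r$ range; together with the triviality that only $\alpha=1$ is feasible at $r=1$, this pins $\alpha^*(r)=1$ there.

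The crux is step~(iii). I would phrase it as monotone comparative statics: it suffices to show that $\mu_s(\alpha)$ is log-supermodular in $(\alpha,r)$, equivalently that $\mu_s(\alpha+1)/\mu_s(\alpha)$ is nondecreasing in $r$, whence the largest maximizer is nondecreasing. The natural instrument is the total positivity of order two (TP$_2$) of the hypergeometric pmf in the pair $(\varphi,r)$, which would let the nondecreasing weight $g_\alpha$ combine favorably with increasing $r$. The real difficulty---and where I expect essentially all the work to sit---is that $\alpha$ enters in three coupled ways at once: it fixes the number of marked nodes $\alpha m$ (so it changes the sampling law itself, not just a parameter inside one fixed law), it fixes the recovery threshold $\varphi\ge\alpha$, and it fixes the rate $g_\alpha$. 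Because the allocations $\alpha$ and $\alpha+1$ live on hypergeometric families with different numbers of marked items, there is no single distribution against which to run the TP$_2$ argument, and the obvious coupling of $\varphi$ across $\alpha$ is not order-preserving. My plan against this obstacle is to compare the two sums through a common refinement---draw the $r$ nodes sequentially and track the induced joint law of the two data-counts---and thereby reduce the claim to a finite family of binomial-coefficient inequalities amenable to the same Vandermonde manipulations used in Theorem~\ref{Le:fixed}. If a fully general argument resists, the fallback is to establish only the single-crossing property of $\mu_s(\alpha+1)-\mu_s(\alpha)$ in $r$ (negative then positive), which already delivers the monotonicity of $\alpha^*(r)$.
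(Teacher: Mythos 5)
The statement you are trying to prove is, in the paper, explicitly a \emph{conjecture}: the authors do not prove it. Their support for it consists of (a) the two disjoint $r$-regions extracted in the proof of Theorem~\ref{Le:fixed} (a low-$r$ interval where $\mu_s(\alpha)<\mu_s(1)$ and a high-$r$ interval where $\mu_s(\alpha)>\mu_s(1)$, separated by a large undecided gap that they attribute to the looseness of the bounds in \eqref{ieq:rate}), and (b) the numerical examples in Sec.~\ref{Sec:simulation}, together with Claim~\ref{explainfix}, which handles the endpoint $r=N$. Your proposal is a reasonable formalization of the informal statement, and two of its three pieces are genuinely within reach: your endpoint computation at $r=N$ is essentially the paper's Claim~\ref{explainfix} (note, though, that boundedness of $D_\alpha$ alone does not give $D_{\alpha+1}/D_\alpha<(\alpha+1)/\alpha$; you also need a matching lower bound on $D_\alpha$, which is what the paper's chain of inequalities in Claim~\ref{explainfix} actually supplies), and your small-$r$ plan correctly identifies that the paper's region-1 bound must be sharpened, since for moderate $\alpha$ that interval can be empty or fail to cover a common range (the paper's own example $(N,m,\alpha)=(30,3,5)$ gives an empty region~1).

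The genuine gap is your step~(iii), and you have in effect conceded it yourself: the monotone-comparative-statics / TP$_2$ argument is only a plan, and the obstruction you name --- that changing $\alpha$ changes the number of marked nodes $\alpha m$, the feasibility threshold $\varphi\ge\alpha$, and the per-set rate $g_\alpha$ simultaneously, so the two sums are expectations under \emph{different} hypergeometric laws with no common TP$_2$ structure --- is precisely why no proof exists in the paper either. Until the log-supermodularity (or at least the single-crossing of $\mu_s(\alpha+1)-\mu_s(\alpha)$ in $r$) is actually established, nothing connects your two endpoint results into the claimed monotonicity of $\alpha^*(r)$, and the large middle range of $r$ remains exactly as undecided as it is in the paper. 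So the proposal is a credible research program that goes beyond the paper's heuristic evidence in ambition, but it is not a proof of the conjecture; its central step is missing, not merely unpolished.
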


\subsection{Probabilistic Access and Scaled Exponential Service}
\begin{claim}
Under the probabilistic access model, 
\[
{\small
    \mu_s(\alpha)=\sum^{\alpha m}_{\varphi=\alpha}\frac{\mu\alpha}{H_\varphi-H_{\varphi-\alpha}} \binom {\alpha m}{\varphi} (1-p)^{\varphi}p^{\alpha m-\varphi}}
\]
\label{cor:prob}
\end{claim}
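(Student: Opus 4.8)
The plan is to specialize the general service-rate formula \eqref{eq:service} to the probabilistic access model by reorganizing the sum over accessed sets $\mathcal{A}$ according to the single quantity on which everything depends, namely $\varphi(\mathcal{A})$, the number of data-carrying nodes that are successfully reached. For an $\alpha$ quasi-symmetric allocation both the recovery indicator in \eqref{eq: Recovery cond} and the conditional service rate \eqref{eq:SetServiceRate} are functions of $\varphi(\mathcal{A})$ alone, so the sum over sets collapses to a one-dimensional sum over the possible values of $\varphi$.

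First I would determine the distribution of $\varphi(\mathcal{A})$. In this model the request is forwarded to every node that stores data, and there are exactly $\alpha m$ such nodes, since the $mk$ coded blocks are stored $k/\alpha$ per node. Each request reaches its node independently and fails with probability $p$, so $\varphi(\mathcal{A})$ is a Binomial$(\alpha m,\,1-p)$ random variable with
\[
P\bigl(\varphi(\mathcal{A})=\varphi\bigr)=\binom{\alpha m}{\varphi}(1-p)^{\varphi}p^{\alpha m-\varphi},\qquad 0\le \varphi\le \alpha m.
\]
The nodes without data play no role here: unlike the fixed-size model, they are never forwarded a request, so they neither enter $\mathcal{A}$ nor dilute $\varphi$, which is why $N$ does not appear.

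Next I would translate the recovery condition. Each reached data node contributes $k/\alpha$ blocks, so $\sum_{i\in\mathcal{A}}s_i\ge k$ is equivalent to $\varphi(\mathcal{A})\ge\alpha$; this fixes the lower summation limit, while the upper limit $\alpha m$ is the total number of data nodes. Substituting the conditional rate $\mu_\alpha(\mathcal{A})=\alpha\mu/(H_\varphi-H_{\varphi-\alpha})$ from \eqref{eq:SetServiceRate} together with the Binomial weight above into \eqref{eq:service}, and grouping all sets with a common value of $\varphi$, yields
\[
\mu_s(\alpha)=\sum_{\varphi=\alpha}^{\alpha m}\frac{\mu\alpha}{H_\varphi-H_{\varphi-\alpha}}\binom{\alpha m}{\varphi}(1-p)^{\varphi}p^{\alpha m-\varphi},
\]
which is the claimed expression.

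This argument is essentially a direct aggregation, so I do not expect a genuine obstacle; the only points requiring care are verifying that $\varphi(\mathcal{A})$ is exactly Binomial with parameters $\alpha m$ and $1-p$, and confirming that the service rate depends on $\mathcal{A}$ only through $\varphi(\mathcal{A})$, which is precisely what legitimizes replacing the sum over sets by the sum over $\varphi$.
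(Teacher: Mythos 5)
Your derivation is correct and is exactly the argument the paper has in mind: the paper's "proof" of Claim~\ref{cor:prob} is the one-line remark that it "follows from the assertions in Sec.~\ref{Sec:model}," and your write-up simply makes explicit the intended steps (there are $\alpha m$ data-carrying nodes, $\varphi(\mathcal{A})\sim\text{Binomial}(\alpha m,1-p)$, recovery iff $\varphi\ge\alpha$, and both the indicator and the conditional rate \eqref{eq:SetServiceRate} depend on $\mathcal{A}$ only through $\varphi$). No gaps; your version is more detailed than the paper's but not a different route.
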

The claim also follows from the assertions in Sec.~\ref{Sec:model}.
We see that finding $\alpha$ that maximizes the $\mu_s(\alpha)$ is still hard.
Therefore, we prove below that $\alpha=1$ is not always optimal.

\begin{theorem}
 For the probabilistic access model, when the waiting time of each node follows scaled exponential distribution, the optimal $\mu_s(\alpha)$ isn't always reached at $\alpha=1$.
 \label{Le:prob}
\end{theorem}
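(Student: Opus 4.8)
The plan is to mirror the structure of the proof of Theorem~\ref{Le:fixed}: exhibit one regime of the failure probability $p$ in which $\mu_s(\alpha)<\mu_s(1)$ and another in which $\mu_s(\alpha)>\mu_s(1)$ for some $\alpha>1$, so that the maximizer of $\mu_s$ cannot be $\alpha=1$ for every parameter choice. The essential difference from the fixed-size case is that the binomial sums in Claim~\ref{cor:prob} do not collapse under Vandermonde's convolution the way the hypergeometric sums did, because truncating the sum at $\varphi=\alpha$ leaves low-order terms behind. Hence, instead of producing explicit intervals of $p$ from the bounds \eqref{ieq:rate}, I would extract the two regimes from the two endpoints $p\to1$ and $p\to0$ and then invoke continuity of $\mu_s(\alpha)$ in $p$ (it is a polynomial in $p$).

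First I would record the exact baseline. Since $H_\varphi-H_{\varphi-1}=1/\varphi$, the $\alpha=1$ summand in Claim~\ref{cor:prob} is simply $\mu\varphi\binom{m}{\varphi}(1-p)^\varphi p^{m-\varphi}$, and recognizing the mean of a $\mathrm{Binomial}(m,1-p)$ variable yields the clean closed form $\mu_s(1)=\mu m(1-p)$. For the regime $\mu_s(\alpha)<\mu_s(1)$ I would then expand both quantities in powers of $q:=1-p$ near $p=1$. The baseline is exactly first order, $\mu_s(1)=\mu m q$, whereas in $\mu_s(\alpha)$ the smallest surviving power of $q$ comes from the $\varphi=\alpha$ term and equals $\frac{\mu\alpha}{H_\alpha}\binom{\alpha m}{\alpha}q^\alpha$ (using $H_0=0$). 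Thus for any $\alpha\ge2$ the ratio $\mu_s(\alpha)/\mu_s(1)=O(q^{\alpha-1})\to0$ as $p\to1$, so $\mu_s(\alpha)<\mu_s(1)$ for all $p$ close enough to $1$.

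For the regime $\mu_s(\alpha)>\mu_s(1)$ I would evaluate at $p=0$, where every summand with $\varphi<\alpha m$ vanishes and only $\varphi=\alpha m$ survives, giving $\mu_s(\alpha)|_{p=0}=\frac{\mu\alpha}{H_{\alpha m}-H_{\alpha m-\alpha}}$ and in particular $\mu_s(1)|_{p=0}=\mu m$. The desired inequality then reduces to the harmonic-number estimate $H_{\alpha m}-H_{\alpha m-\alpha}<\alpha/m$, which I would prove by bounding the $\alpha$-term sum $\sum_{i=\alpha m-\alpha+1}^{\alpha m}1/i$ above by $\alpha$ times its largest term $1/(\alpha m-\alpha+1)$ and checking $\frac{\alpha}{\alpha m-\alpha+1}<\frac{\alpha}{m}$, an inequality equivalent to $(\alpha-1)(m-1)>0$ and hence valid for all $\alpha,m\ge2$. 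Continuity in $p$ then propagates this strict inequality to a neighborhood $(0,\varepsilon)$, producing the second regime.

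The main obstacle is conceptual rather than computational: one must recognize that the explicit-interval technique of Theorem~\ref{Le:fixed} does not transfer, since the truncated binomial sums prevent a Vandermonde-type closed form, and so the argument has to be anchored at the two limits $p\to0,1$ and completed by continuity. The only genuinely nontrivial estimate is the harmonic bound $H_{\alpha m}-H_{\alpha m-\alpha}<\alpha/m$ used in the $p=0$ evaluation. I would note that, strictly, only the second regime is required for the statement as phrased, but I would include both in order to parallel Theorem~\ref{Le:fixed} and to motivate the analogous conjecture on how the optimal $\alpha$ shifts as $p$ varies.
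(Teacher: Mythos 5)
Your proof is correct, but it departs from the paper's argument in both its premise and its mechanics. The paper proves Theorem~\ref{Le:prob} by exactly the template of Theorem~\ref{Le:fixed}: it sandwiches $\mu_{\alpha}(\mathcal{A})$ with \eqref{ieq:rate}, pulls the factor $\prod_{i=0}^{\alpha-2}\frac{\alpha m-1-i}{\varphi-1-i}$ out of the binomial coefficient, bounds it by its value at $\varphi=\alpha$, and then the remaining sum $\sum_{\varphi=\alpha}^{\alpha m}\binom{\alpha m-\alpha}{\varphi-\alpha}(1-p)^{\varphi}p^{\alpha m-\varphi}=(1-p)^{\alpha}$ collapses by the binomial theorem (playing the role Vandermonde played in the fixed-size case). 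This yields the explicit closed-form bounds $\mu_s(\alpha)<\mu\alpha m\binom{\alpha m-1}{\alpha-1}(1-p)^{\alpha}$ and $\mu_s(\alpha)>\mu(\alpha m-\alpha+1)(1-p)^{\alpha}$, and hence the explicit intervals $p>1-\big(\alpha\binom{\alpha m-1}{\alpha-1}\big)^{-1/(\alpha-1)}$ and $p<1-\big(\tfrac{m}{\alpha m-\alpha+1}\big)^{1/(\alpha-1)}$. So your stated reason for abandoning that route --- that the truncated binomial sums do not collapse --- is mistaken; they collapse cleanly after reindexing. That said, your endpoint-plus-continuity argument is a valid and in some ways cleaner alternative: the $q^{\alpha}$ versus $q$ leading-order comparison near $p=1$ and the exact evaluation $\mu_s(\alpha)\vert_{p=0}=\mu\alpha/(H_{\alpha m}-H_{\alpha m-\alpha})$ versus $\mu_s(1)\vert_{p=0}=\mu m$, settled by the harmonic estimate $H_{\alpha m}-H_{\alpha m-\alpha}<\alpha/m$, avoid the combinatorial bounding entirely (and your $p=0$ computation is essentially the content of Claim~\ref{explainprob}, which the paper states separately). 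What you lose is quantitative information: the paper's intervals are explicit and are used afterwards in the numerical discussion and to motivate Conjecture~\ref{hyp:prob1}, whereas continuity only gives unspecified neighborhoods of $p=0$ and $p=1$. Two small cautions: your chain $\sum_{i=\alpha m-\alpha+1}^{\alpha m}1/i\le\alpha/(\alpha m-\alpha+1)<\alpha/m$ needs $m\ge2$ in the last step (for $m=1$ the last inequality degenerates to equality, though the first is then strict for $\alpha\ge2$, so the conclusion still holds); and, strictly speaking, only your second regime is needed for the theorem as stated, as you note.
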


\begin{proof}
To prove $\alpha=1$ is not the optimal choice for all $p$, we will show that 1) there is a region of $p$ values s.t.\ $\mu_s(\alpha)<\mu_s(1)$ and 2) there is a region of $p$ values s.t.\ $\mu_s(\alpha)>\mu_s(1)$ for $\alpha>1$.
\\[1ex]
1) We consider $\mu_s(\alpha)<\mu_s(1)$ as follows:
According to \eqref{ieq:rate},
\begin{small}
\begin{align*}
    \mu_s(\alpha)&<\mu\sum^{\alpha m}_{\varphi=\alpha}\varphi\binom{\alpha m}{\varphi}(1-p)^{\varphi}p^{\alpha m-\varphi}\\
    &=\mu\alpha m\sum^{\alpha m}_{\varphi=\alpha}\binom{\alpha m-1}{\varphi-1}(1-p)^{\varphi}p^{\alpha m-\varphi}\\
    &=\mu\alpha m\sum^{\alpha m}_{\varphi=\alpha}(\prod^{\alpha-2}_{i=0}\frac{\alpha m-1-i}{\varphi-1-i})\binom{\alpha m-\alpha}{\varphi-\alpha}(1-p)^{\varphi}p^{\alpha m-\varphi}
     \end{align*}
Since $\varphi$ goes from $\alpha$ to $\alpha m$, we have
    \begin{align*}
    \mu_s(\alpha)&<\mu\alpha m\binom{\alpha m-1}{\alpha-1}(1-p)^{\alpha}\sum^{\alpha m-\alpha}_{\varphi=0}\binom{\alpha m-\alpha}{\varphi}\\&(1-p)^{\varphi}p^{\alpha m-\alpha-\varphi}\\
    &By\ using\ binomial\ expansion\ ,\\
    &=\mu\alpha m\binom{\alpha m-1}{\alpha-1}(1-p)^{\alpha}
\end{align*}
\end{small}
As we know,
\begin{small}
\begin{align*}
    \mu_s(1)&=\sum^{m}_{\varphi=1}\mu_s(1|\varphi)\binom{m}{\varphi}(1-p)^{\varphi}p^{m-\varphi}\\
    &=\mu\sum^{m}_{\varphi=1}\varphi\binom{m}{\varphi}(1-p)^{\varphi}p^{m-\varphi}
    =\mu m(1-p)
\end{align*}
\end{small}
Then to satisfy $\mu_s(\alpha)<\mu_s(1)$, we need
\begin{small}
\begin{align*}
   &\mu\alpha m\binom{\alpha m-1}{\alpha-1}(1-p)^{\alpha}<\mu m(1-p)\\
   &\Leftrightarrow (1-p)^{\alpha-1}<\frac{1}{\alpha\binom{\alpha m-1}{\alpha-1}}
\Leftrightarrow p>1-\frac{1}{\sqrt[\alpha-1]{\alpha\binom{\alpha m-1}{\alpha-1}}}
\end{align*}
\end{small}
Thus $\mu_s(\alpha)<\mu_s(1)$ for
$p\in \Big(1-\frac{1}{\sqrt[\alpha-1]{\alpha\binom{\alpha m-1}{\alpha-1}}},1\Big]$.
\\[1ex]
2) We consider $\mu_s(\alpha)>\mu_s(1)$ as follows:
\\[1ex]
According to \eqref{ieq:rate},
\begin{small}
\begin{align*}
    \mu_s(\alpha)&>\mu\sum^{\alpha m}_{\varphi=\alpha}(\varphi-\alpha+1)\binom{\alpha m}{\varphi}(1-p)^{\varphi}p^{\alpha m-\varphi}\\
    &=\mu\sum^{\alpha m}_{\varphi=\alpha}(\varphi-\alpha+1)(\prod^{\alpha-2}_{i=0}\frac{\alpha m-i}{\varphi-i})\binom{\alpha m-\alpha+1}{\varphi-\alpha+1}\\&(1-p)^{\varphi}p^{\alpha m-\varphi}     \end{align*}
Since $\varphi$ goes from $\alpha$ to $\alpha m$, we have
    \begin{align*}
    \mu_s(\alpha)&>\mu\sum^{\alpha m}_{\varphi=\alpha}(\varphi-\alpha+1)\binom{\alpha m-\alpha+1}{\varphi-\alpha+1}(1-p)^{\varphi}p^{\alpha m-\varphi}\\
    &=\mu(\alpha m-\alpha+1)(1-p)^{\alpha}\sum^{\alpha m-\alpha}_{\varphi=0}\binom{\alpha m-\alpha}{\varphi}\\&(1-p)^{\varphi}p^{\alpha m-\alpha-\varphi}
    =\mu(\alpha m-\alpha+1)(1-p)^{\alpha}
\end{align*}
\end{small}
As we know $\mu_s(1)=\mu m(1-p)$
Then to satisfy $\mu_s(\alpha)>\mu_s(1)$, we need
\begin{small}
\begin{align*}
   &\mu(\alpha m-\alpha+1)(1-p)^{\alpha}>\mu m(1-p)\\
   &\Leftrightarrow (1-p)^{\alpha-1}>\frac{m}{\alpha m-\alpha+1}
   \Leftrightarrow p<1-\sqrt[\alpha-1]{\frac{m}{\alpha m-\alpha+1}}
\end{align*}
\end{small}
Thus $\mu_s(\alpha)>\mu_s(1)$ for $p\in \Big[0,1-\sqrt[\alpha-1]{\frac{m}{\alpha m-\alpha+1}}\Big)$.
\end{proof}

In the proof above, we find two regions of $p$ which can show when $\mu_s(1)$ reaches the maximum. It is easy to see that both regions are exist for any parameter set. Here we can also give a parameter set as $(m,\mu,\alpha)$ to show an example of the two regions in the proof. When the parameter set is (2,1,4), we can get two regions [0.81,1] for $\mu_s(\alpha)<\mu_s(1)$, and [0,0.26] for $\mu_s(\alpha)>\mu_s(1)$. There is also a gap between two regions which shows the undecided region.

Here we give the Conjecture \ref{hyp:prob1} according to the $p$'s regions. 
\begin{conjecture}
When $p$ is large, $\mu_s(1)$ is more likely to be the maximum; When $p$ is small, the maximum $\mu_s(\alpha)$ is not at 1, and the optimal $\alpha$ is decreasing with $p$.
\label{hyp:prob1}
\end{conjecture}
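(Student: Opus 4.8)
The plan is to mirror the argument used for the fixed-size access model in Theorem~\ref{Le:fixed}. To show that $\alpha=1$ is not universally optimal it suffices to exhibit a single nonempty range of the access-failure probability $p$ in which $\mu_s(\alpha)>\mu_s(1)$ for some $\alpha>1$; for completeness I would also isolate the complementary range in which $\mu_s(\alpha)<\mu_s(1)$, so that the two regimes together delineate where $\alpha=1$ is and is not advantageous. The key simplification relative to the fixed-size case is that, under probabilistic access, the number of data-carrying nodes in the accessed set is exactly $\mathrm{Binomial}(\alpha m,1-p)$, so every sum over $\varphi$ runs cleanly from $\alpha$ to $\alpha m$ with no $\min(r,\alpha m)$ truncation to track, and Vandermonde's convolution is replaced by the plain binomial theorem.

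First I would pin down the baseline. For $\alpha=1$ the per-set rate in \eqref{eq:SetServiceRate} collapses to $\mu\varphi$ since $H_\varphi-H_{\varphi-1}=1/\varphi$, so $\mu_s(1)$ from Claim~\ref{cor:prob} is simply $\mu$ times the mean of a $\mathrm{Binomial}(m,1-p)$ variable, giving the exact value $\mu_s(1)=\mu m(1-p)$. Then I would sandwich $\mu_s(\alpha)$ between closed forms using the uniform bounds $\mu(\varphi-\alpha+1)\le\mu_\alpha(\mathcal{A})\le\mu\varphi$ from \eqref{ieq:rate}, which removes the awkward harmonic-difference factor $1/(H_\varphi-H_{\varphi-\alpha})$ and leaves expressions summable against the binomial weights.

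The core computation is the evaluation of these two bounding sums. For the upper bound I would use the absorption identity $\varphi\binom{\alpha m}{\varphi}=\alpha m\binom{\alpha m-1}{\varphi-1}$, factor $\binom{\alpha m-1}{\varphi-1}=\big(\prod_{i=0}^{\alpha-2}\tfrac{\alpha m-1-i}{\varphi-1-i}\big)\binom{\alpha m-\alpha}{\varphi-\alpha}$, and bound the $\varphi$-dependent product monotonically by its value at $\varphi=\alpha$, namely $\binom{\alpha m-1}{\alpha-1}$. What remains is $\sum_\varphi\binom{\alpha m-\alpha}{\varphi-\alpha}(1-p)^\varphi p^{\alpha m-\varphi}$, which after shifting the index collapses by the binomial theorem to $(1-p)^\alpha$, yielding $\mu_s(\alpha)<\mu\alpha m\binom{\alpha m-1}{\alpha-1}(1-p)^\alpha$. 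A symmetric manipulation on the lower bound factors $\binom{\alpha m}{\varphi}=\big(\prod_{i=0}^{\alpha-2}\tfrac{\alpha m-i}{\varphi-i}\big)\binom{\alpha m-\alpha+1}{\varphi-\alpha+1}$, drops the product (which is at least $1$ since $\varphi\le\alpha m$), and applies $(\varphi-\alpha+1)\binom{\alpha m-\alpha+1}{\varphi-\alpha+1}=(\alpha m-\alpha+1)\binom{\alpha m-\alpha}{\varphi-\alpha}$ together with the same collapse, giving $\mu_s(\alpha)>\mu(\alpha m-\alpha+1)(1-p)^\alpha$.

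Finally I would compare each bound with $\mu_s(1)=\mu m(1-p)$. Dividing through by $\mu m(1-p)$ reduces both inequalities to a single threshold on $(1-p)^{\alpha-1}$, producing $\mu_s(\alpha)<\mu_s(1)$ whenever $p>1-\big(\alpha\binom{\alpha m-1}{\alpha-1}\big)^{-1/(\alpha-1)}$ and $\mu_s(\alpha)>\mu_s(1)$ whenever $p<1-\big(\tfrac{m}{\alpha m-\alpha+1}\big)^{1/(\alpha-1)}$. The step needing genuine care---the main obstacle---is confirming that the second region is nonempty, since the bound $\mu(\varphi-\alpha+1)$ is quite loose and could in principle collapse the threshold to $0$. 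Verifying $1-\big(\tfrac{m}{\alpha m-\alpha+1}\big)^{1/(\alpha-1)}>0$ amounts to $\alpha m-\alpha+1>m$, i.e.\ $(\alpha-1)(m-1)>0$, so the region is nonempty precisely when $m>1$ and $\alpha>1$. This is exactly the regime of interest, and the existence of that second region establishes the claim; the gap between the two thresholds reflects only the slack in \eqref{ieq:rate} and is immaterial to the statement being proved.
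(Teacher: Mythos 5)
Your computation is correct as far as it goes, and it reproduces essentially verbatim the paper's proof of Theorem~\ref{Le:prob}: the two thresholds $p>1-\bigl(\alpha\binom{\alpha m-1}{\alpha-1}\bigr)^{-1/(\alpha-1)}$ and $p<1-\bigl(\tfrac{m}{\alpha m-\alpha+1}\bigr)^{1/(\alpha-1)}$ are exactly the regions the paper derives from \eqref{ieq:rate} via the absorption identity and the binomial theorem, and your nonemptiness check $(\alpha-1)(m-1)>0$ is a worthwhile addition that the paper glosses over. The problem is that the statement you were asked to prove is Conjecture~\ref{hyp:prob1}, not Theorem~\ref{Le:prob}, and the paper deliberately leaves it as a conjecture because the argument you give does not reach it. Two gaps remain. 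First, showing $\mu_s(\alpha)<\mu_s(1)$ for each fixed $\alpha>1$ above an $\alpha$-dependent threshold does not by itself show that $\mu_s(1)$ is the global maximum for large $p$; since $\alpha$ ranges over the finite set $\{1,\dots,N/m\}$ you could take the maximum of the finitely many thresholds to obtain a uniform $p_0<1$, but you do not say this, and even then the undecided gap between the two regions (visible in the paper's numerical example $(m,\mu,\alpha)=(2,1,4)$, where the regions are $[0.81,1]$ and $[0,0.26]$) means the bounds only support the hedged phrasing ``more likely to be the maximum.''

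Second, and more fundamentally, the conjecture's final clause --- that the optimal $\alpha$ is \emph{decreasing} in $p$ --- is untouched by your argument. The sandwich $\mu(\varphi-\alpha+1)\le\mu_\alpha(\mathcal{A})\le\mu\varphi$ discards the harmonic-difference structure entirely, so it can separate $\mu_s(\alpha)$ from $\mu_s(1)$ in extreme regimes of $p$ but cannot track how $\arg\max_\alpha \mu_s(\alpha)$ moves as $p$ varies; establishing that monotonicity would require comparing $\mu_s(\alpha)$ with $\mu_s(\alpha+1)$ directly for intermediate $p$ (in the spirit of Claim~\ref{explainprob}, which handles only $p=0$), and that is precisely the open problem the paper defers to future work. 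In short, you have a correct proof of Theorem~\ref{Le:prob} by the paper's own route, but only evidence for, not a proof of, Conjecture~\ref{hyp:prob1}.
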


\subsection{Fixed-size Access and Shifted Exponential Service}
\begin{claim}
\label{cor:fixeds}
For fixed-size access model under shifted exponential distribution,
the DSS servise rate $\mu_s(\alpha)$ is given by
\[
{\small
    \frac{\mu\alpha}{\binom N r} \sum^{\min(r,\alpha m)}_{\varphi=\alpha}\frac{1}{\Delta\mu+\alpha(H_{\varphi}-H_{\varphi-\alpha})} \binom {\alpha m}{\varphi} \binom {N-\alpha m}{r-\varphi}}
\]
\label{eq:shift}
\end{claim}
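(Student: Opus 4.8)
The plan is to derive the formula directly from the general service-rate expression \eqref{eq:service} by specializing its two ingredients, $P(\mathcal{A})$ and $\mu_\alpha(\mathcal{A})$, to the fixed-size access model and the shifted exponential service model respectively, and then collapsing the sum over subsets into a sum over the single relevant statistic $\varphi(\mathcal{A})$. This mirrors exactly the route behind Claims \ref{cor:fixed} and \ref{cor:prob}, only with the shifted rate replacing the scaled one.

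First I would record the two specializations. In the fixed-size access model the request is forwarded to a uniformly random $r$-subset of the $N$ nodes, so $P(\mathcal{A}) = 1/\binom{N}{r}$ for every $r$-subset $\mathcal{A}$. For the shifted exponential service model, the per-set rate is given by \eqref{eq:SetServiceRateshift}, namely $\mu_\alpha(\mathcal{A}) = \alpha\mu/(\Delta\mu + \alpha(H_{\varphi(\mathcal{A})} - H_{\varphi(\mathcal{A})-\alpha}))$, which depends on $\mathcal{A}$ only through $\varphi(\mathcal{A})$, the number of accessed nodes that carry data.

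The key observation is that under an $\alpha$ quasi-symmetric allocation exactly $\alpha m$ of the $N$ nodes store data: each data node holds $k/\alpha$ blocks and $\sum_i s_i = mk$, giving $mk/(k/\alpha) = \alpha m$ data-carrying nodes, while the remaining $N-\alpha m$ are empty. Since the summand depends only on $\varphi(\mathcal{A})$, I would group the $r$-subsets by their common value of $\varphi$. The number of $r$-subsets that hit exactly $\varphi$ data nodes is $\binom{\alpha m}{\varphi}\binom{N-\alpha m}{r-\varphi}$. The recovery condition \eqref{eq: Recovery cond}, $\sum_{i\in\mathcal{A}} s_i \ge k$, is equivalent to $\varphi \ge \alpha$, and $\varphi$ cannot exceed $\min(r,\alpha m)$, which pins down the summation range from $\alpha$ to $\min(r,\alpha m)$.

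Substituting $P(\mathcal{A}) = 1/\binom{N}{r}$, replacing the subset sum by the $\varphi$-sum carrying multiplicity $\binom{\alpha m}{\varphi}\binom{N-\alpha m}{r-\varphi}$, and factoring the constant $\alpha\mu$ out of the numerator of \eqref{eq:SetServiceRateshift} yields the claimed expression. There is no genuine obstacle here: the statement is a bookkeeping corollary of the model definitions in Sec.~\ref{sec:StorageModel} and the service model of this section. The only points that demand care are confirming that the number of data-carrying nodes is exactly $\alpha m$ and that the correct upper summation limit is $\min(r,\alpha m)$ rather than $\alpha m$ alone, both of which are immediate once the storage and access models are made explicit.
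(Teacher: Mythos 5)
Your derivation is correct and is exactly the argument the paper intends when it says the claim ``follows from the assertions in Sec.~II'': specialize $P(\mathcal{A})=1/\binom{N}{r}$ and $\mu_\alpha(\mathcal{A})$ from \eqref{eq:SetServiceRateshift}, note that an $\alpha$ quasi-symmetric allocation places data on exactly $\alpha m$ nodes, and collapse the subset sum by counting the $\binom{\alpha m}{\varphi}\binom{N-\alpha m}{r-\varphi}$ subsets with each value of $\varphi\in[\alpha,\min(r,\alpha m)]$. No gaps; you have simply written out the bookkeeping the paper leaves implicit.
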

The claim follows from the assertions in Sec.~\ref{Sec:model}.
Similarly, finding $\alpha$ that maximizes the $\mu_s(\alpha)$ is hard.
Instead, we prove below that $\alpha=1$ is not always optimal.

\begin{theorem}
 For the fixed-size access model, when the waiting time of each node follows shifted exponential distribution, the optimal $\mu_s(\alpha)$ isn't always reached at $\alpha=1$.
 \label{Le:shift-fix}
\end{theorem}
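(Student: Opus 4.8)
The plan is to mirror the structure of the proofs of Theorems~\ref{Le:fixed} and \ref{Le:prob}: exhibit one region of $r$ on which $\mu_s(\alpha)<\mu_s(1)$ and another on which $\mu_s(\alpha)>\mu_s(1)$ for some fixed $\alpha>1$, so that $\alpha=1$ cannot be optimal for every $r$. The engine is again the two-sided bound \eqref{ieq:rateshift} together with the factoring of the hypergeometric weights and Vandermonde's convolution. The key new difficulty, absent in the scaled case, is that the shifted per-set rate at $\alpha=1$, namely $\mu_1(\mathcal{A})=\mu\varphi/(\Delta\mu\varphi+1)$, does not telescope when summed against the weights $\binom{m}{\varphi}\binom{N-m}{r-\varphi}$, so $\mu_s(1)$ can no longer be evaluated in closed form as in the two earlier proofs. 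I must therefore bound $\mu_s(1)$ from both sides as well, using \eqref{ieq:rateshift} at $\alpha=1$, i.e.\ $\mu\varphi/(\Delta\mu m+1)\le\mu_1(\mathcal{A})\le\mu\varphi/(\Delta\mu+1)$.

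For the first region I would upper bound $\mu_s(\alpha)$ by replacing $\mu_\alpha(\mathcal{A})$ with its upper bound $\mu\varphi/(\Delta\mu+\alpha)$ from \eqref{ieq:rateshift}. This reduces the sum to exactly the one treated in Theorem~\ref{Le:fixed} up to the constant $1/(\Delta\mu+\alpha)$, so the same pulling out of $\binom{\alpha m-1}{\alpha-1}$ and Vandermonde's convolution give $\mu_s(\alpha)<\mu\alpha m\binom{\alpha m-1}{\alpha-1}\binom{N-\alpha}{r-\alpha}/[(\Delta\mu+\alpha)\binom{N}{r}]$. Simultaneously I lower bound $\mu_s(1)$ by $\mu\varphi/(\Delta\mu m+1)$, giving $\mu_s(1)>\mu m\binom{N-1}{r-1}/[(\Delta\mu m+1)\binom{N}{r}]$. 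Dividing, using $\binom{N-1}{r-1}/\binom{N-\alpha}{r-\alpha}=\prod_{i=0}^{\alpha-2}(N-1-i)/(r-1-i)$ and the telescoping inequality $\prod_{i=0}^{\alpha-2}(N-1-i)/(r-1-i)>((N-1)/(r-1))^{\alpha-1}$ already used in the excerpt, the sufficient condition collapses to $r<1+(N-1)/\sqrt[\alpha-1]{\alpha(\Delta\mu m+1)\binom{\alpha m-1}{\alpha-1}/(\Delta\mu+\alpha)}$.

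The second region is the mirror image. I lower bound $\mu_s(\alpha)$ by $\alpha\mu(\varphi-\alpha+1)/[\Delta\mu(\alpha m-\alpha+1)+\alpha^2]$ and upper bound $\mu_s(1)$ by $\mu\varphi/(\Delta\mu+1)$. Pulling out $\binom{\alpha m-\alpha+1}{\varphi-\alpha+1}$ and applying Vandermonde give $\mu_s(\alpha)>\alpha\mu(\alpha m-\alpha+1)\binom{N-\alpha}{r-\alpha}/\{[\Delta\mu(\alpha m-\alpha+1)+\alpha^2]\binom{N}{r}\}$ and $\mu_s(1)<\mu m\binom{N-1}{r-1}/[(\Delta\mu+1)\binom{N}{r}]$. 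Combining with the opposite telescoping bound $\prod_{i=0}^{\alpha-2}(N-1-i)/(r-1-i)<((N-\alpha+1)/(r-\alpha+1))^{\alpha-1}$ yields a sufficient condition of the form $r>\sqrt[\alpha-1]{m[\Delta\mu(\alpha m-\alpha+1)+\alpha^2]/[(\Delta\mu+1)\alpha(\alpha m-\alpha+1)]}\,(N-\alpha+1)+\alpha-1$, which holds for $r$ near $N$.

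The main obstacle is bookkeeping rather than conceptual: because $\mu_s(1)$ is no longer exact, every comparison now carries an extra ratio of shift-dependent factors, $(\Delta\mu m+1)/(\Delta\mu+\alpha)$ in part~1 and $(\Delta\mu+1)/[\Delta\mu(\alpha m-\alpha+1)+\alpha^2]$ in part~2, and one must verify that these factors do not shrink the two intervals to emptiness, i.e.\ that the part~2 lower threshold on $r$ stays below $N$ and the part~1 upper threshold stays above $\alpha$ for suitable parameters. A clean independent guarantee is available at $\Delta=0$, where the shifted model coincides with the scaled model of Theorem~\ref{Le:fixed}, so both strict inequalities already hold on nonempty $r$-ranges; since $\mu_s(\alpha)$ and $\mu_s(1)$ are continuous in $\Delta$, the inequalities, and hence the theorem, persist for a range of $\Delta>0$, while the explicit thresholds above make the admissible $(\Delta,r)$ pairs concrete.
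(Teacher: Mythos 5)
Your main argument is the paper's proof almost verbatim: the same upper bound $\mu_s(\alpha)<\mu\alpha m\binom{\alpha m-1}{\alpha-1}\binom{N-\alpha}{r-\alpha}/[(\Delta\mu+\alpha)\binom{N}{r}]$ paired with the lower bound $\mu_s(1)>\mu m\binom{N-1}{r-1}/[(\Delta\mu m+1)\binom{N}{r}]$ for region~1, the mirror-image pair for region~2, and the same telescoping estimates on $\prod_{i=0}^{\alpha-2}(N-1-i)/(r-1-i)$, yielding thresholds on $r$ that are algebraically identical to the paper's. You also correctly identify the one new difficulty (that $\mu_s(1)$ no longer evaluates in closed form) and resolve it exactly as the paper does, by applying \eqref{ieq:rateshift} at $\alpha=1$.

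The one place you diverge is your ``clean independent guarantee'' of non-emptiness via continuity at $\Delta=0$, and that step is wrong. At $\Delta=0$ the shifted model gives $T_s(\alpha|\varphi)=\frac{1}{\mu}(H_\varphi-H_{\varphi-\alpha})$, whereas the scaled model of Theorem~\ref{Le:fixed} gives $\frac{1}{\alpha\mu}(H_\varphi-H_{\varphi-\alpha})$; these differ by the factor $\alpha$, so setting $\Delta=0$ does \emph{not} recover the scaled model but rather the size-independent exponential model of \cite{noori2016storage}, for which $\alpha=1$ is universally optimal and the region $\mu_s(\alpha)>\mu_s(1)$ is empty. Indeed your part-2 threshold at $\Delta=0$ reduces to $\sqrt[\alpha-1]{\alpha m/(\alpha m-\alpha+1)}\,(N-\alpha+1)+\alpha-1>N$, confirming emptiness there, and the paper's own example $(N,m,\mu,\Delta,\alpha)=(30,2,1,1,4)$ shows the second region can also be empty for small positive $\Delta$. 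Since the theorem needs at least one parameter setting where $\mu_s(\alpha)>\mu_s(1)$, the non-emptiness check cannot be waved through by this continuity argument; it has to be verified directly, e.g.\ by a concrete instance such as the paper's $(N,m,\mu,\Delta,\alpha)=(30,2,1,10,4)$, for which region~1 is $[4,5.8]$ and region~2 is $[25.7,30]$, both nonempty. With that substitution your proof is complete and coincides with the paper's.
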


\begin{proof}
This proof is similar as which for the Theorem \ref{Le:fixed}. Therefore we only keep some key steps.

1) We consider $\mu_s(\alpha)<\mu_s(1)$ as follows:
According to \eqref{ieq:rateshift},
\begin{small}
\begin{align*}
    \mu_s(\alpha)&<\frac{\mu}{(\Delta\mu+\alpha)\binom{N}{r}}\sum^{\min(r,\alpha m)}_{\varphi=\alpha}\varphi \binom{\alpha m}{\varphi}\binom{N-\alpha m}{r-\varphi}\\
    &<\frac{\mu\alpha m\binom{\alpha m-1}{\alpha-1} \binom{N-\alpha}{r-\alpha}}{(\Delta\mu+\alpha)\binom{N}{r}}
\end{align*}
\end{small}
As we know,
\begin{small}
\begin{align*}
    \mu_s(1)
    &>\frac{\mu}{(\Delta\mu m+1)\binom{N}{r}}\sum^{m}_{\varphi=1}\varphi\binom{m}{\varphi}\binom{N-m}{r-\varphi}
    =\frac{\mu m\binom{N-1}{r-1}}{(\Delta\mu m+1)\binom{N}{r}}
\end{align*}
\end{small}
Then to satisfy $\mu_s(\alpha)<\mu_s(1)$, we need
\begin{equation}
\begin{split}
   &\frac{\mu\alpha m\binom{\alpha m-1}{\alpha-1}\binom{N-\alpha}{r-\alpha} }{(\Delta\mu+\alpha)\binom{N}{r}}<\frac{\mu m\binom{N-1}{r-1}}{(\Delta\mu m+1)\binom{N}{r}}\\
   &\Leftrightarrow \frac{\alpha(\Delta\mu m+1)\binom{\alpha m-1}{\alpha-1}}{\Delta\mu+\alpha}<\prod^{\alpha-2}_{i=0}\frac{N-1-i}{r-1-i}
\end{split}
\label{eq:shift-fix}
\end{equation}
As $\frac{N-1-i}{r-1-i}<\frac{N-2-i}{r-2-i}$ for $N>r$, we have
$\prod^{\alpha-2}_{i=0}\frac{N-1-i}{r-1-i}>(\frac{N-1}{r-1})^{\alpha-1}$. Inequality \eqref{eq:shift-fix} is true when
\begin{small}
\begin{align*}
    &\frac{\alpha(\Delta\mu m+1)\binom{\alpha m-1}{\alpha-1}}{\Delta\mu+\alpha}<(\frac{N-1}{r-1})^{\alpha-1}\\
    &\Leftrightarrow r<1+\sqrt[\alpha-1]{\frac{\Delta\mu+\alpha}{\alpha(\Delta\mu m+1)\binom{\alpha m-1}{\alpha-1}}}(N-1)
\end{align*}
\end{small}
Therefore, for $r\in \Big[\alpha,1+\sqrt[\alpha-1]{\frac{\Delta\mu+\alpha}{\alpha(\Delta\mu m+1)\binom{\alpha m-1}{\alpha-1}}}(N-1)\Big)$, $\mu_s(\alpha)<\mu_s(1)$ is true.
\\[1ex]
2) We consider $\mu_s(\alpha)>\mu_s(1)$ as follows:
\\[1ex]
According to \eqref{ieq:rateshift},
\begin{small}
\begin{align*}
    \mu_s(\alpha)&>\frac{\mu\alpha}{(\Delta\mu(\alpha m-\alpha+1)+\alpha^2)\binom{N}{r}}\sum^{\min(r,\alpha m)}_{\varphi=\alpha}(\varphi-\alpha+1)\\ &\binom{\alpha m}{\varphi}\binom{N-\alpha m}{r-\varphi}\\
    &>\frac{\mu\alpha(\alpha m-\alpha+1)\binom{N-\alpha}{r-\alpha}}{(\Delta\mu(\alpha m-\alpha+1)+\alpha^2)\binom{N}{r}}
\end{align*}
\end{small}
As we know,
\begin{small}
\begin{align*}
    \mu_s(1)
    &<\frac{\mu}{(\Delta\mu +1)\binom{N}{r}}\sum^{m}_{\varphi=1}\varphi\binom{m}{\varphi}\binom{N-m}{r-\varphi}=\frac{\mu m\binom{N-1}{r-1}}{(\Delta\mu +1)\binom{N}{r}}
\end{align*}
\end{small}
Then to satisfy $\mu_s(\alpha)>\mu_s(1)$, we need
\begin{equation}
\begin{split}
   &\frac{\mu\alpha(\alpha m-\alpha+1)\binom{N-\alpha}{r-\alpha}}{(\Delta\mu(\alpha m-\alpha+1)+\alpha^2)\binom{N}{r}}>\frac{\mu m\binom{N-1}{r-1}}{(\Delta\mu +1)\binom{N}{r}}\\
   &\Leftrightarrow \frac{\alpha(\Delta\mu+1)(\alpha m-\alpha +1)}{m(\Delta\mu(\alpha m-\alpha+1)+\alpha^2)}>\prod^{\alpha-2}_{i=0}\frac{N-1-i}{r-1-i}
\end{split}
\label{eq:shift-fix2}
\end{equation}
As $\frac{N-1-i}{r-1-i}<\frac{N-2-i}{r-2-i}$ for $N>r$, we have
$\prod^{\alpha-2}_{i=0}\frac{N-1-i}{r-1-i}<(\frac{N-\alpha+1}{r-\alpha+1})^{\alpha-1}$. Inequality \eqref{eq:shift-fix2} is true when

\begin{small}
\begin{align*}
    &\frac{\alpha(\Delta\mu+1)(\alpha m-\alpha +1)}{m(\Delta\mu(\alpha m-\alpha+1)+\alpha^2)}>(\frac{N-\alpha+1}{r-\alpha+1})^{\alpha-1}\\
    &\Leftrightarrow r>\sqrt[\alpha-1]{\frac{\Delta\mu m(\alpha m-\alpha+1)+\alpha^2 m}{\alpha(\Delta\mu+1)(\alpha m-\alpha+1)}}\\ &(N-\alpha+1)+\alpha-1
\end{align*}
\end{small}
Therefore, for $r\in \Big(\sqrt[\alpha-1]{\frac{\Delta\mu m(\alpha m-\alpha+1)+\alpha^2 m}{\alpha(\Delta\mu+1)(\alpha m-\alpha+1)}}(N-\alpha+1)+\alpha-1,N\Big]$, $\mu_s(\alpha)>\mu_s(1)$ is true.
\end{proof}

In the proof above, we find two regions of $r$, which are similar as what we find in Theorem \ref{Le:fixed}, can show when $\mu_s(1)$ is the maximum . Here we can give a parameter set as $(N,m,\mu,\Delta,\alpha)$ to show an example of the two regions in the proof. When the parameter set is (30,2,1,10,4), we can get two regions [4,5.8] for $\mu_s(\alpha)<\mu_s(1)$, and [25.7,30] for $\mu_s(\alpha)>\mu_s(1)$, and both regions are exist. When the parameter set is (30,3,1,10,6), we can get another two regions [6,4.1] for $\mu_s(\alpha)<\mu_s(1)$ and [27.4,30] for $\mu_s(\alpha)>\mu_s(1)$, and the first region is not exist.  When the parameter set is (30,2,1,1,4), we can get another two regions [4,7.6] for $\mu_s(\alpha)<\mu_s(1)$ and [30.4,30] for $\mu_s(\alpha)>\mu_s(1)$, and the second region is not exist. 
Here the Conjecture \ref{hyp:fix1} holds. 

\subsection{Probabilistic Access and Shifted Exponential Service}
\begin{claim}
Under the probabilistic access model, 
\[
{\small
    \mu_s(\alpha)\!=\!\sum^{\alpha m}_{\varphi=\alpha}\frac{\mu\alpha}{\Delta\mu+\alpha(H_{\varphi}-H_{\varphi-\alpha})} \binom {\alpha m}{\varphi} (1-p)^{\varphi}p^{\alpha m-\varphi}}
\]
\label{cor:probs}
\end{claim}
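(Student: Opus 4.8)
The plan is to derive the formula directly from the general service-rate expression \eqref{eq:service} by grouping the accessed sets $\mathcal{A}$ according to the number $\varphi(\mathcal{A})$ of accessed data-bearing nodes. The central observation, exactly as in the companion Claim~\ref{cor:prob}, is that under an $\alpha$ quasi-symmetric allocation the per-set service rate $\mu_\alpha(\mathcal{A})$ given by \eqref{eq:SetServiceRateshift} depends on $\mathcal{A}$ only through $\varphi(\mathcal{A})$, so every set with the same value of $\varphi$ contributes the same rate and may be collected into a single term.

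First I would rewrite the recovery constraint appearing in \eqref{eq:service}. Since each data-bearing node stores $k/\alpha$ blocks, a set $\mathcal{A}$ with $\varphi(\mathcal{A})$ such nodes holds $\varphi(\mathcal{A})\cdot k/\alpha$ encoded blocks, so the condition $\sum_{i\in\mathcal{A}} s_i \ge k$ of \eqref{eq: Recovery cond} is equivalent to $\varphi(\mathcal{A}) \ge \alpha$. This fixes the lower summation limit at $\varphi=\alpha$ and discards every set that cannot recover the file, so those sets contribute zero rate.

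Next I would determine the distribution of $\varphi(\mathcal{A})$. In the probabilistic access model the request reaches all $\alpha m$ nodes that store data, and each such node is successfully accessed independently with probability $1-p$; the $N-\alpha m$ empty nodes are irrelevant to both recovery and service. Hence $\varphi(\mathcal{A}) \sim \mathrm{Binomial}(\alpha m,\, 1-p)$, giving $P(\varphi(\mathcal{A})=\varphi)=\binom{\alpha m}{\varphi}(1-p)^\varphi p^{\alpha m-\varphi}$ and an upper summation limit of $\varphi=\alpha m$. Substituting this probability together with $\mu_\alpha(\mathcal{A})=\frac{\alpha\mu}{\Delta\mu+\alpha(H_\varphi-H_{\varphi-\alpha})}$ from \eqref{eq:SetServiceRateshift} into \eqref{eq:service} then yields the stated expression.

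There is no genuine obstacle here: the claim is a bookkeeping specialization of \eqref{eq:service}, and it differs from Claim~\ref{cor:prob} only in that the scaled-exponential denominator $H_\varphi-H_{\varphi-\alpha}$ is replaced by the shifted denominator $\Delta\mu+\alpha(H_\varphi-H_{\varphi-\alpha})$ of \eqref{eq:SetServiceRateshift}. The only points requiring care are the equivalence of the recovery condition with $\varphi\ge\alpha$ and the independence of the per-node access failures that makes $\varphi(\mathcal{A})$ Binomial; once these are in place the formula follows immediately.
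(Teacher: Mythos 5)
Your derivation is correct and matches the paper's intent exactly: the paper offers no written proof beyond asserting that the claim "follows from the assertions in Sec.~\ref{Sec:model}," and the steps you supply---recovery iff $\varphi(\mathcal{A})\ge\alpha$, the Binomial$(\alpha m,\,1-p)$ distribution of $\varphi(\mathcal{A})$ over the data-bearing nodes, and substitution of the shifted-exponential per-set rate \eqref{eq:SetServiceRateshift} into \eqref{eq:service}---are precisely the bookkeeping the authors leave implicit. Nothing further is needed.
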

The claim follows from the assertions in Sec.~\ref{Sec:model}, and finding $\alpha$ that maximizes the $\mu_s(\alpha)$ remains hard.
Therefore, we prove below that $\alpha=1$ is not always optimal.
\begin{theorem}
 For the fixed-size access model, when the waiting time of each node follows shifted exponential distribution, the optimal $\mu_s(\alpha)$ isn't always reached at $\alpha=1$.
 \label{Le:shift-prob}
\end{theorem}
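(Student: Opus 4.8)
The plan is to establish that minimal spreading cannot be universally optimal by exhibiting, for a fixed $\alpha>1$, one range of the access parameter $r$ on which $\mu_s(\alpha)<\mu_s(1)$ together with a second, disjoint range on which $\mu_s(\alpha)>\mu_s(1)$; since $\alpha=1$ cannot dominate on the second range, the optimum is not always at $\alpha=1$. The starting point is the closed form of Claim~\ref{cor:fixeds}, whose summands carry the unwieldy denominator $\Delta\mu+\alpha(H_\varphi-H_{\varphi-\alpha})$. Rather than analyze the harmonic differences directly, I would replace every per-set rate $\mu_\alpha(\mathcal{A})$ by the two elementary estimates in \eqref{ieq:rateshift}, which sandwich it between $\mu\varphi/(\Delta\mu+\alpha)$ and $\alpha\mu(\varphi-\alpha+1)/(\Delta\mu(\alpha m-\alpha+1)+\alpha^2)$. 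This reduces each direction to evaluating a weighted hypergeometric sum, mirroring the scaled-exponential argument of Theorem~\ref{Le:fixed} but with the shift factors $\Delta\mu$ now present in every constant.

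For the first range I would upper-bound $\mu_s(\alpha)$ using the left inequality of \eqref{ieq:rateshift}, then apply the absorption identity $\varphi\binom{\alpha m}{\varphi}=\alpha m\binom{\alpha m-1}{\varphi-1}$, peel off the product $\prod_{i=0}^{\alpha-2}\frac{\alpha m-1-i}{\varphi-1-i}$ to expose $\binom{\alpha m-\alpha}{\varphi-\alpha}$, bound that (decreasing) product by its value $\binom{\alpha m-1}{\alpha-1}$ at $\varphi=\alpha$, and collapse the remaining sum by Vandermonde's convolution into $\binom{N-\alpha}{r-\alpha}$. Simultaneously I would lower-bound $\mu_s(1)$ by observing that for $\alpha=1$ the per-set rate equals $\mu\varphi/(\Delta\mu\varphi+1)\ge\mu\varphi/(\Delta\mu m+1)$ since $\varphi\le m$, giving $\mu_s(1)\ge \mu m\binom{N-1}{r-1}/[(\Delta\mu m+1)\binom{N}{r}]$. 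Comparing the two bounds reduces $\mu_s(\alpha)<\mu_s(1)$ to $\frac{\alpha(\Delta\mu m+1)\binom{\alpha m-1}{\alpha-1}}{\Delta\mu+\alpha}<\prod_{i=0}^{\alpha-2}\frac{N-1-i}{r-1-i}$, which I would make explicit in $r$ by bounding the product from below by $(\frac{N-1}{r-1})^{\alpha-1}$ and inverting, yielding the threshold $r<1+\sqrt[\alpha-1]{\frac{\Delta\mu+\alpha}{\alpha(\Delta\mu m+1)\binom{\alpha m-1}{\alpha-1}}}(N-1)$.

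For the second range I would run the symmetric argument: lower-bound $\mu_s(\alpha)$ via the right inequality of \eqref{ieq:rateshift}, write $\binom{\alpha m}{\varphi}=\binom{\alpha m-\alpha+1}{\varphi-\alpha+1}\prod_{i=0}^{\alpha-2}\frac{\alpha m-i}{\varphi-i}$ with the product now $\ge 1$ (hence droppable for a lower bound), use $(\varphi-\alpha+1)\binom{\alpha m-\alpha+1}{\varphi-\alpha+1}=(\alpha m-\alpha+1)\binom{\alpha m-\alpha}{\varphi-\alpha}$, and invoke Vandermonde again to reach $\binom{N-\alpha}{r-\alpha}$; in parallel I would upper-bound $\mu_s(1)$ by $\mu\varphi/(\Delta\mu+1)$ using $\varphi\ge 1$. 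The condition $\mu_s(\alpha)>\mu_s(1)$ then becomes $\frac{\alpha(\Delta\mu+1)(\alpha m-\alpha+1)}{m(\Delta\mu(\alpha m-\alpha+1)+\alpha^2)}>\prod_{i=0}^{\alpha-2}\frac{N-1-i}{r-1-i}$, which I would invert by bounding the product from above by $(\frac{N-\alpha+1}{r-\alpha+1})^{\alpha-1}$, producing the lower threshold $r>\sqrt[\alpha-1]{\frac{\Delta\mu m(\alpha m-\alpha+1)+\alpha^2 m}{\alpha(\Delta\mu+1)(\alpha m-\alpha+1)}}(N-\alpha+1)+\alpha-1$.

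The only routine ingredients are the two Vandermonde collapses and the monotonicity fact that each factor $\frac{N-1-i}{r-1-i}$ increases in $i$ whenever $N>r$, which is exactly what licenses the one-sided product bounds $(\frac{N-1}{r-1})^{\alpha-1}$ and $(\frac{N-\alpha+1}{r-\alpha+1})^{\alpha-1}$. The main obstacle will not be any single manipulation but guaranteeing that the two resulting $r$-intervals are genuinely nonempty, because the shift terms $\Delta\mu$ loosen every estimate relative to the scaled case; I would therefore verify nonemptiness on a concrete instance such as $(N,m,\mu,\Delta,\alpha)=(30,2,1,10,4)$, where both regions survive, to confirm that $\alpha=1$ is indeed not always optimal.
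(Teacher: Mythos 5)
Your proposal is mathematically sound and, step for step, reproduces the paper's own proof of Theorem~\ref{Le:shift-fix} (fixed-size access, shifted exponential service): the same sandwich from \eqref{ieq:rateshift}, the same absorption identities and Vandermonde collapses, the same one-sided product bounds $(\frac{N-1}{r-1})^{\alpha-1}$ and $(\frac{N-\alpha+1}{r-\alpha+1})^{\alpha-1}$, and the same two $r$-thresholds. However, the theorem you were asked to prove is the one labeled \ref{Le:shift-prob}, which sits in the subsection on \emph{probabilistic} access; its statement says ``fixed-size access model'' only because of an apparent copy-paste error, and the paper's own proof of it works with the failure probability $p$ and the binomial weights $\binom{\alpha m}{\varphi}(1-p)^{\varphi}p^{\alpha m-\varphi}$ from Claim~\ref{cor:probs}, producing regions $p>1-\sqrt[\alpha-1]{\frac{\Delta\mu+\alpha}{\alpha(\Delta\mu m+1)\binom{\alpha m-1}{\alpha-1}}}$ and $p<1-\sqrt[\alpha-1]{\frac{m(\Delta\mu(\alpha m-\alpha+1)+\alpha^2)}{\alpha(\Delta\mu+1)(\alpha m-\alpha+1)}}$ rather than regions of $r$. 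In the probabilistic case the Vandermonde convolutions are replaced by binomial expansions of $((1-p)+p)^{\alpha m-\alpha}$, which actually simplifies matters, and the $p$-regions are nonempty for every parameter set in the scaled case (though not always in the shifted case). Your closing worry is well placed either way: since the theorem only claims existence of a parameter choice where $\alpha=1$ is beaten, the proof is incomplete without exhibiting at least one concrete instance where the $\mu_s(\alpha)>\mu_s(1)$ region is nonempty --- the paper itself shows with $(N,m,\mu,\Delta,\alpha)=(30,2,1,1,4)$ that this region can be empty --- so your check at $(30,2,1,10,4)$ is a necessary part of the argument, not an optional sanity check.
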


\begin{proof}
This proof is similar as which for the Theorem \ref{Le:prob}. Therefore we only keep some key steps.

1) We consider $\mu_s(\alpha)<\mu_s(1)$ as follows:
According to \eqref{ieq:rateshift},
\begin{small}
\begin{align*}
    \mu_s(\alpha)&<\frac{\mu}{\Delta\mu+\alpha}\sum^{\alpha m}_{\varphi=\alpha}\varphi\binom{\alpha m}{\varphi}(1-p)^{\varphi}p^{\alpha m-\varphi}\\
    &<\frac{\mu\alpha m\binom{\alpha m-1}{\alpha-1}(1-p)^{\alpha}}{\Delta\mu+\alpha}
\end{align*}
\end{small}
As we know,
\begin{small}
\begin{align*}
    \mu_s(1)
    &>\frac{\mu}{\Delta\mu m+1}\sum^{m}_{\varphi=1}\varphi\binom{m}{\varphi}(1-p)^{\varphi}p^{m-\varphi}=\frac{\mu m(1-p)}{\Delta\mu m+1}
\end{align*}
\end{small}
Then to satisfy $\mu_s(\alpha)<\mu_s(1)$, we need
\begin{small}
\begin{align*}
   &\frac{\mu\alpha m\binom{\alpha m-1}{\alpha-1}(1-p)^{\alpha}}{\Delta\mu+\alpha}<\frac{\mu m(1-p)}{\Delta\mu m+1}\\
   &\Leftrightarrow p>1-\sqrt[\alpha-1]{\frac{\Delta\mu+\alpha}{\alpha(\Delta\mu m+1)\binom{\alpha m-1}{\alpha-1}}}
\end{align*}
\end{small}
$p\in \Big(1-\sqrt[\alpha-1]{\frac{\Delta\mu+\alpha}{\alpha(\Delta\mu m+1)\binom{\alpha m-1}{\alpha-1}}},1\Big]$, $\mu_s(\alpha)<\mu_s(1)$ is true.
\\[1ex]
2) We consider $\mu_s(\alpha)>\mu_s(1)$ as follows:
According to \eqref{ieq:rateshift},
\begin{small}
\begin{align*}
    \mu_s(\alpha)&>\frac{\alpha\mu}{\Delta\mu(\alpha m-\alpha+1)+\alpha^2}\sum^{\alpha m}_{\varphi=\alpha}(\varphi-\alpha+1)\binom{\alpha m}{\varphi}\\&(1-p)^{\varphi}p^{\alpha m-\varphi}\\
    &>\frac{\alpha\mu(\alpha m-\alpha+1)(1-p)^{\alpha}}{\Delta\mu(\alpha m-\alpha+1)+\alpha^2}
\end{align*}
\end{small}
As we know,
\begin{small}
\begin{align*}
    \mu_s(1)
    &<\frac{\mu}{\Delta\mu +1}\sum^{m}_{\varphi=1}\varphi\binom{m}{\varphi}(1-p)^{\varphi}p^{m-\varphi}
    =\frac{\mu m(1-p)}{\Delta\mu +1}
\end{align*}
\end{small}
Then to satisfy $\mu_s(\alpha)>\mu_s(1)$, we need
\begin{small}
\begin{align*}
   &\frac{\alpha\mu(\alpha m-\alpha+1)(1-p)^{\alpha}}{\Delta\mu(\alpha m-\alpha+1)+\alpha^2}>\frac{\mu m(1-p)}{\Delta\mu +1}\\
   &\Leftrightarrow p<1-\sqrt[\alpha-1]{\frac{m(\Delta\mu(\alpha m-\alpha+1)+\alpha^2)}{\alpha(\Delta\mu +1)(\alpha m-\alpha+1)}}
\end{align*}
\end{small}
 Therefore, for $p\in \Big[0,1-\sqrt[\alpha-1]{\frac{m(\Delta\mu(\alpha m-\alpha+1)+\alpha^2)}{\alpha(\Delta\mu +1)(\alpha m-\alpha+1)}}\Big)$, $\mu_s(\alpha)>\mu_s(1)$ is true.
\end{proof}

From the proof above, we find two regions of $p$, similar as what we found in Theorem \ref{Le:prob}, and can see when $\mu_s(1)$ is the maximum. Here we give some examples to analyze these two regions. We also give a parameter set as $(m,\mu,\Delta,\alpha)$ to show an example of the two regions in the proof. When the parameter set is (2,1,10,4), we can get two regions [0.83,1] for $\mu_s(\alpha)<\mu_s(1)$,and [0,0.15] for $\mu_s(\alpha)>\mu_s(1)$, and both regions are exist. When the parameter set is (2,1,10,4), we can get another two regions [0,-0.016] for $\mu_s(\alpha)<\mu_s(1)$ and [0.77,1] for $\mu_s(\alpha)>\mu_s(1)$, and the first region is not exist. 
Here the Conjecture \ref{hyp:prob1} holds. 

From Theorems \ref{Le:fixed}, \ref{Le:prob}, \ref{Le:shift-fix} and \ref{Le:shift-prob}, we see that the optimal allocation varies in accordance with different system parameters. Recall that $\alpha=1$ was found to be universally optimal in \cite{noori2016storage} where it was assumed that the download time does not scale with the size of data.

\section{OPTIMAL STORAGE ALLOCATION ANALYSIS}
\label{Sec:simulation}
We next numerically analyze the optimal storage allocation. We compute the service rate and probability of successful recovery with the allocation parameter $\alpha$. Since the accessed nodes number $r$, coded file size ratio $m$ and failure probability $p$ are the key parameters for the the storage system, we also vary these values to see how the optimal allocation changes.

According to the formulas in Claim \ref{cor:fixed} and \ref{cor:prob}, we know that the rate parameter $\mu$ in scaled exponential distribution doesn't affect the numerical analysis results. But from Claim \ref{cor:fixeds} and \ref{cor:probs}, we can see both rate parameter $\mu$ and shift parameter $\Delta$ affect the numerical analysis results. When $\Delta\mu \ll 1$, the shifted exponential distribution is equivalent to an exponential distribution, then the minimum spreading allocation is universally optimal; When $\Delta\mu \gg 1$, the shifted exponential distribution is equivalent to a constant, then the $\mu_s(\alpha)$ is changing with the probability of successful recovery. Therefore, we select the $\mu=1$ and $\Delta=3$ in the simulations below as appropriate values.

\subsection{Fixed-size Access}
For fixed-sized access model, we present two figures to analyze the optimal storage allocation in the interval $\alpha\in[1,5]$ or $[1,6]$. In Fig.~\ref{fig:Fixed1}, we have three subfigures, the left is the average service rate for the scaled exponential distribution, the middle is for the shifted exponential distribution, and the right is the probability of successful recovery. Firstly, let's analyze the left and right subfigures. When $m=3$ and $4$,  $\mu_s(\alpha)$ and $P_s(\alpha)$ are decreasing with $\alpha$, and the optimal allocation is $\alpha=1$. When $m=5$, the largest $\mu_s(\alpha)$ is reached at $\alpha=3$, but the $P_s(\alpha)$ is decreasing, then it is better to select $\alpha$ between $1$ to $3$ based on the weight of $\mu_s(\alpha)$ and $P_s(\alpha)$; When $m=6$, $\mu_s(\alpha)$ is increasing and $P_s(\alpha)$ reaches maximum at $\alpha=5$, then the optimal allocation is $\alpha=5$. Secondly, the middle subfigure has a similar pattern as the left one. And the only different is when $m=4$, the largest $\mu_s(\alpha)$ is reached at $\alpha=2$.
\begin{figure}[t]
    \centering
    \includegraphics[width=0.48\textwidth]{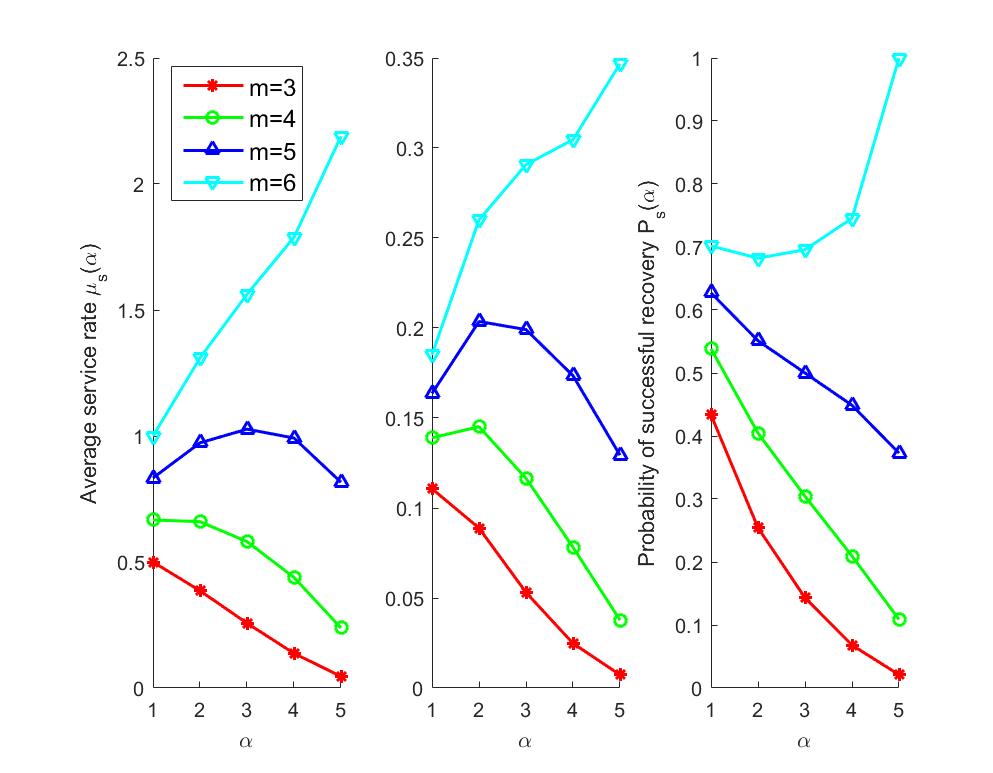}
    \caption{The average service rate and probability of successful recovery of fixed-size access model are changing with the allocation parameter $\alpha$ under different $m$ when $N=30$, $r=5$. Left: the distribution is scaled exponential with $\mu=1$; Middle: the distribution is shifted exponential with $\mu=1$ and $\Delta=3$; Right: the probability of successful recovery.}
    \label{fig:Fixed1}
\end{figure}

Fig.~\ref{fig:Fixed2} shows similar results. Firstly,let's analyze the left and right subfigures. When $r=6$, both $\mu_s(\alpha)$ and $P_s(\alpha)$ are decreasing, the optimal allocation is $\alpha=1$. When $r=7$ and $8$, $P_s(\alpha)$ is still decreasing, but $\mu_s(\alpha)$ reaches maximum at $\alpha=2$ and $3$, then the optimal allocation $\alpha$ is between $1$ to $2$ or $1$ to $3$. When $r=9$, $\mu_s(\alpha)$ is increasing, but $P_s(\alpha)$ is decreasing, then optimal allocation is selected based on the weight of $\mu_s(\alpha)$ and $P_s(\alpha)$. Secondly, the middle subfigure shows similar results as left subfigure except when $r=9$, the largest $\mu_s(\alpha)$ is reached at $\alpha=3$.

\begin{figure}[t]
    \centering
    \includegraphics[width=0.48\textwidth]{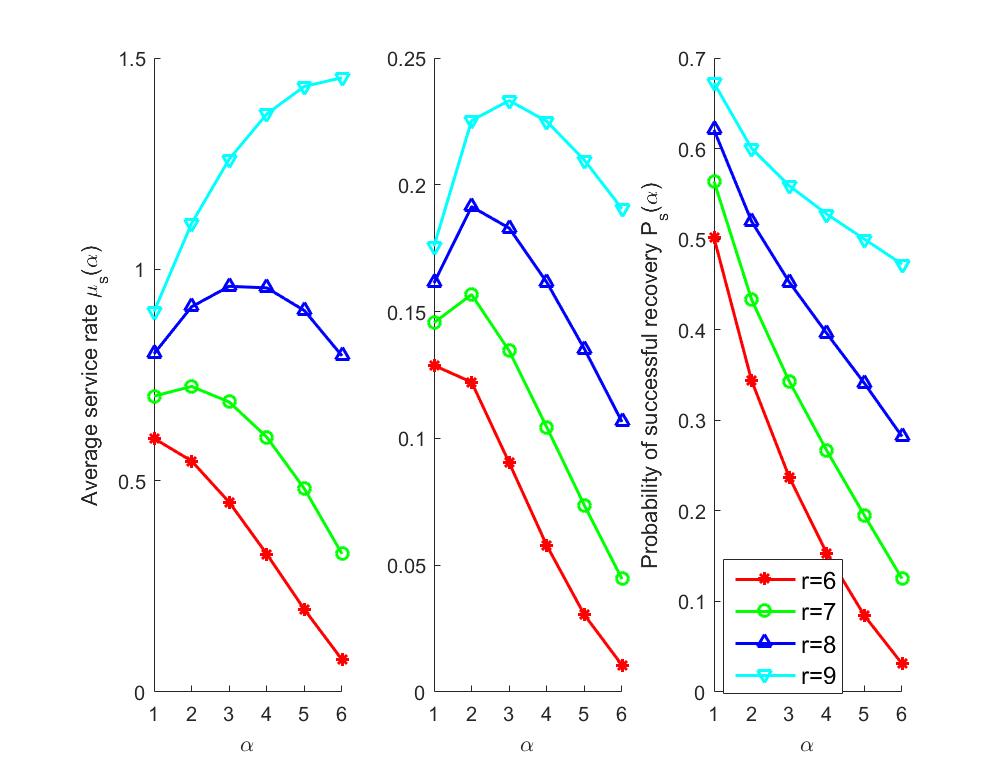}
    \caption{The average service rate and probability of successful recovery of fixed-size access model are changing with the allocation parameter $\alpha$ under different $r$ when $N=30$, $m=3$. Left: the distribution is scaled exponential with $\mu=1$; Middle: the distribution is shifted exponential with $\mu=1$ and $\Delta=3$; Right: the probability of successful recovery.}
    \label{fig:Fixed2}
\end{figure}

We can conclude that the optimal allocation is not fixed at $\alpha=1$ as we proved in the previous section, and here we can see that the optimal $\alpha$ is changing with different system parameters. The following claim helps to understand why the optimal $\alpha$ varies.

\begin{claim}
For fixed-size access model with $r=N$, if we use scaled exponential distribution, we have $\mu_s(\alpha)$<$\mu_s(\alpha+1)$; If we use shifted exponential distribution, the result varies based on different parameters' values.
\label{explainfix}
\end{claim}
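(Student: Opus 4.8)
The plan is to exploit the fact that the fixed-size access model degenerates when $r=N$: accessing every node forces $\varphi(\mathcal A)=\alpha m$ deterministically. Concretely, in the formula of Claim~\ref{cor:fixed} the factor $\binom{N-\alpha m}{r-\varphi}=\binom{N-\alpha m}{N-\varphi}$ vanishes unless $\varphi=\alpha m$, so the sum collapses to a single term. Writing $D(\alpha):=H_{\alpha m}-H_{\alpha(m-1)}$ (note that $\alpha m-\alpha=\alpha(m-1)$), I get the closed forms $\mu_s(\alpha)=\mu\alpha/D(\alpha)$ in the scaled case and, from Claim~\ref{cor:fixeds}, $\mu_s(\alpha)=\mu\alpha/\bigl(\Delta\mu+\alpha D(\alpha)\bigr)$ in the shifted case. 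Everything then reduces to understanding the single sequence $D(\alpha)$.

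For the scaled case, $\mu_s(\alpha)<\mu_s(\alpha+1)$ is equivalent, after clearing denominators, to $(\alpha+1)D(\alpha)>\alpha D(\alpha+1)$, i.e.\ to $D(\alpha)/\alpha$ being strictly decreasing. I would prove this analytically via the representation $1/n=\int_0^1 x^{n-1}\,dx$, which gives $D(\alpha)=\int_0^1\frac{x^{\alpha(m-1)}-x^{\alpha m}}{1-x}\,dx$. Substituting and cancelling the common factor $1-x$ through the geometric-sum identity $\frac{1-x^{\alpha}}{1-x}=\sum_{k=0}^{\alpha-1}x^{k}$, the difference $(\alpha+1)D(\alpha)-\alpha D(\alpha+1)$ becomes $\int_0^1 x^{\alpha(m-1)}\,p(x)\,dx$, where $p(x)=(\alpha+1)\sum_{k=0}^{\alpha-1}x^{k}-\alpha\,x^{m-1}\sum_{k=0}^{\alpha}x^{k}$. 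It then suffices to show $p(x)>0$ on $(0,1)$.

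The main obstacle is exactly this pointwise positivity, and I would settle it with Descartes' rule of signs. Collecting coefficients of $p$, one checks that for every $m$ the coefficient sequence has exactly one sign change: the low-degree terms carry positive coefficients (namely $\alpha+1$, or $1$ in the range where the two blocks overlap), while the top $\alpha+1$ terms carry coefficient $-\alpha$, and no $+,-,+$ pattern ever occurs. Hence $p$ has at most one positive root; since $p(1)=(\alpha+1)\alpha-\alpha(\alpha+1)=0$ and $p(0)>0$, that unique root is $x=1$, so $p>0$ on $(0,1)$. The integrand is therefore positive, $(\alpha+1)D(\alpha)>\alpha D(\alpha+1)$, and $\mu_s(\alpha)<\mu_s(\alpha+1)$ follows.

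For the shifted case the assertion is only that monotonicity is not universal, so it suffices to exhibit a sign flip, and the cleanest route is to compare just $\alpha=1$ and $\alpha=2$. From $\mu_s(1)=\mu/(\Delta\mu+1/m)$ and $\mu_s(2)=2\mu/\bigl(\Delta\mu+2D(2)\bigr)$, clearing denominators shows $\mu_s(2)>\mu_s(1)$ iff $\Delta\mu>2\bigl(D(2)-D(1)\bigr)$, and the right-hand side is strictly positive because $D(2)=\frac{1}{2m-1}+\frac{1}{2m}>\frac{2}{2m}=\frac1m=D(1)$. Thus there is a threshold $\Delta^{\ast}=2\bigl(D(2)-D(1)\bigr)/\mu>0$: for $\Delta>\Delta^{\ast}$ the rate increases from $\alpha=1$ to $\alpha=2$ (matching the large-shift regime $\mu_s(\alpha)\approx\alpha/\Delta$), whereas for $\Delta<\Delta^{\ast}$ it decreases (matching the vanishing-shift regime $\mu_s(\alpha)\to\mu/D(\alpha)$). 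Since the direction of the inequality depends on $\Delta$ and, through $D(2)-D(1)$, on $m$, the shifted-case conclusion follows; a concrete pair of $\Delta$ values straddling $\Delta^{\ast}$ can be quoted to illustrate it, in the spirit of the numerical examples elsewhere in the paper.
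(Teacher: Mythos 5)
Your proof is correct, and its skeleton coincides with the paper's: with $r=N$ the sum in Claim~\ref{cor:fixed} collapses to the single term $\varphi=\alpha m$, and everything reduces to the key inequality $(\alpha+1)D(\alpha)>\alpha D(\alpha+1)$ with $D(\alpha)=H_{\alpha m}-H_{\alpha(m-1)}$, which is exactly the paper's comparison $T_s(\alpha\,|\,\alpha m)>T_s(\alpha+1\,|\,(\alpha+1)m)$. You differ in how that inequality is proved. The paper stays elementary: it rewrites $\tfrac{1}{\alpha}D(\alpha)$ as $\tfrac{1}{\alpha+1}\bigl(1+\tfrac{1}{\alpha}\bigr)D(\alpha)$, absorbs the surplus $\tfrac{1}{\alpha}D(\alpha)\ge\tfrac{1}{\alpha m}$ as an $(\alpha+1)$-st reciprocal, and compares the resulting $\alpha+1$ terms one by one against those of $D(\alpha+1)$. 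You instead use the integral representation $D(\alpha)=\int_0^1 x^{\alpha(m-1)}\tfrac{1-x^{\alpha}}{1-x}\,dx$ and reduce the claim to pointwise positivity of a polynomial on $(0,1)$ via Descartes' rule of signs; your coefficient bookkeeping is right in all three regimes ($m=1$, $2\le m\le\alpha$, $m>\alpha$ each give exactly one sign change, and $p(1)=0$ pins the unique positive root at the endpoint), so the argument is airtight, if heavier in machinery than the paper's termwise comparison. For the shifted case you go further than the paper: the paper only sketches the two asymptotic regimes $\Delta\mu\gg 1$ and $\Delta\mu\ll 1$ (the latter instantiated only at $m=1$), whereas your exact threshold $\Delta^{\ast}=2\bigl(D(2)-D(1)\bigr)/\mu=\tfrac{1}{m(2m-1)\mu}>0$ for the $\alpha=1$ versus $\alpha=2$ comparison is valid for every $m$ and cleanly exhibits both signs of the difference; this part is a genuine sharpening of the paper's informal treatment.
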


\begin{proof}
1) We consider scaled exponential distribution:
\\[1ex]
Note that under $\alpha$-allocation, exactly $\alpha m$ nodes contain data and under $(\alpha+1)$-allocation, exactly $(\alpha+1)m$ nodes contain data. We will show that $\mu_s(\alpha)$<$\mu_s(\alpha+1)$ by showing that
$T_s(\alpha|\alpha m)>T_s(\alpha+1|(\alpha+1)m)$:
\begin{small}
\begin{align*}
T_s(\alpha|\alpha m)&=\frac{1}{\alpha \mu}(H_{\alpha m}-H_{\alpha m-\alpha})~~~~\text{cf.~\eqref{eq:time}} \\
&=\frac{1}{(1+\alpha)\mu}\Big(1+\frac{1}{\alpha}\Big)\sum_{i=1}^{\alpha }\frac{1}{\alpha m-\alpha +i}\\
&>\frac{1}{(1+\alpha)\mu}\Bigg(\sum_{i=1}^{\alpha }\frac{1}{\alpha m-\alpha +i}+\frac{1}{\alpha m}\Bigg)\\
&>\frac{1}{(1+\alpha)\mu}\Bigg(\sum_{i=0}^{\alpha}\frac{1}{\alpha m +m-\alpha+i}\Bigg)\\
&=\frac{1}{(1+\alpha)\mu}\big(H_{\alpha m+m}-H_{(\alpha+1)(m-1)}\big)\\[1pt]
&=T_s(\alpha+1|(\alpha+1)m)
\end{align*}
\end{small}
2) We consider shifted exponential distribution:
\\[1ex]
Similarly, we can get $T_s(\alpha|\alpha m)$:
\begin{small}
\begin{align*}
    T_s(\alpha|\alpha m)=\frac{\Delta}{\alpha}+\frac{1}{\mu}(H_{\alpha m}-H_{\alpha m-\alpha})
\end{align*}
\end{small}
If $\Delta\gg 1/\mu$, we can say $T_s(\alpha|\alpha m)=\frac{\Delta}{\alpha}$, then $T_s(\alpha|\alpha m)>T_s(\alpha+1|(\alpha+1)m)$ is obvious. If $\Delta\ll 1/\mu$, we can say $T_s(\alpha|\alpha m)=\frac{1}{\mu}(H_{\alpha m}-H_{\alpha m-\alpha})$, when $m=1$, $T_s(\alpha|\alpha m)<T_s(\alpha+1|(\alpha+1)m)$ is obvious.
\end{proof}

From Claim~\ref{explainfix}, if we use scaled exponential distribution, the average service rate $\mu_s(\alpha)$ is increasing with $\alpha$ when the probability of successful recovery is $1$. Meanwhile from Figs.~\ref{fig:Fixed1} and \ref{fig:Fixed2}, we know the probability of success access $P_s(\alpha)$ is decreasing under some setups of storage system. Then the pattern of $\mu_s(\alpha)$ is decided by the impact of $P_s(\alpha)$. If $P_s(\alpha)$ has a higher impact, e.g.,  $m=3$ and $4$ or $r=6$ and $7$, $\mu_s(\alpha)$ is changing with the pattern of  $P_s(\alpha)$; If $P_s(\alpha)$ has a lower impact, e.g.  $m=6$ or $r=9$, $\mu_s(\alpha)$ is increasing with $\alpha$; If the impact of  $P_s(\alpha)$ is in between, e.g. $m=5$ or $r=8$, $\mu_s(\alpha)$ is like a parabola. 

If we use shifted exponential, the average service rate $\mu_s(\alpha)$ is changing according to different parameter values. From Figs.~\ref{fig:Fixed1} and \ref{fig:Fixed2}, we know when $\mu=1$ and $\Delta=3$, the pattern of $\mu_s(\alpha)$ is similar as scaled exponential distribution's results.

\subsection{Probabilistic Access}
For probabilistic access model, we present three figures to analyze the optimal storage allocation in the interval $\alpha\in[1,10]$. In Fig.~\ref{fig:Prob1}, we have three subfigures, the left is the average service rate for the scaled exponential service, the middle is for the shifted exponential service, and the right is the probability of successful recovery. When $m=1$, both $\mu_s(\alpha)$ and $P_s(\alpha)$ are decreasing, then the optimal allocation is $\alpha=1$; When $m=2,3$ and $4$, both $\mu_s(\alpha)$ and $P_s(\alpha)$ are increasing, then the optimal allocation is $\alpha=10$.

\begin{figure}[t]
    \centering
    \includegraphics[width=0.48\textwidth]{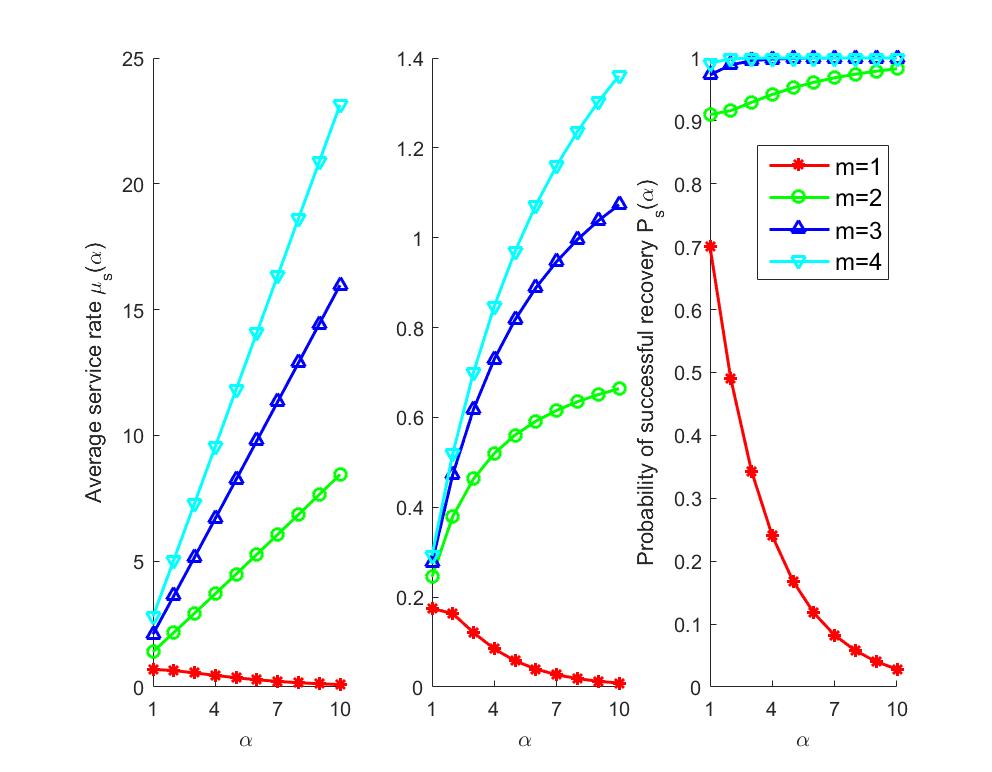}
    \caption{The average service rate and probability of successful recovery of probabilistic access model are changing with the allocation parameter $\alpha$ under different $m$ when $p=0.3$. Left: the service is scaled exponential with $\mu=1$; Middle: the service is shifted exponential with $\mu=1$ and $\Delta=3$; Right: the probability of successful recovery.}
    \label{fig:Prob1}
\end{figure}

In Fig.~\ref{fig:Prob2}, the pattern of $\mu_s(\alpha)$ is changing from increasing to decreasing as $p$ is changing from 0.51 to 0.71. $P_s(\alpha)$ is always decreasing in these three cases. 
 
\begin{figure}[t]
    \centering
    \includegraphics[width=0.48\textwidth]{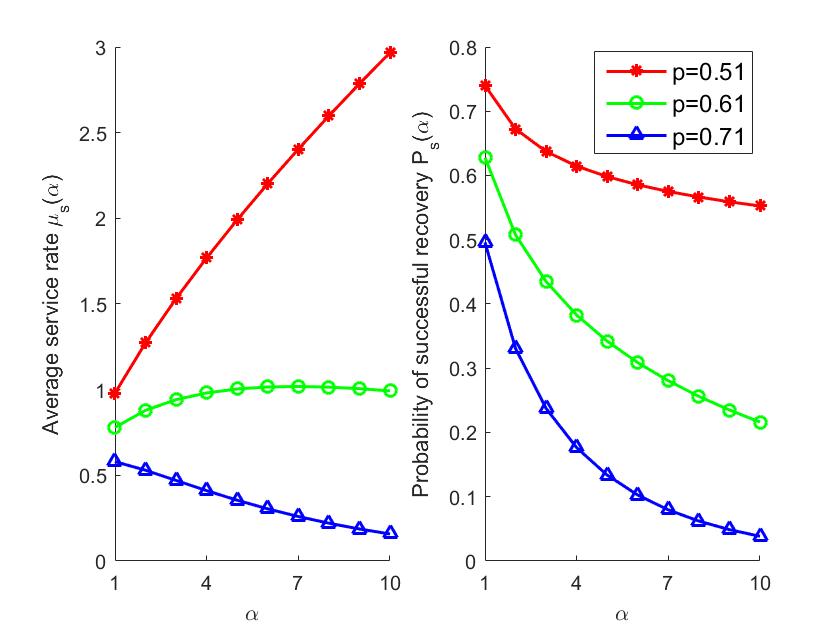}
    \caption{The average service rate and probability of successful recovery of probabilistic access model are changing with the allocation parameter $\alpha$ under different $p$ when $m=2$ and the distribution is scaled exponential with $\mu=1$.}
    \label{fig:Prob2}
\end{figure}

In Fig.~\ref{fig:SProb2}, the slope of $\mu_s(\alpha)$ and $P_s(\alpha)$ switches from increasing to decreasing as $p$ is changing from 0.3 to 0.7.

\begin{figure}[t]
    \centering
    \includegraphics[width=0.48\textwidth]{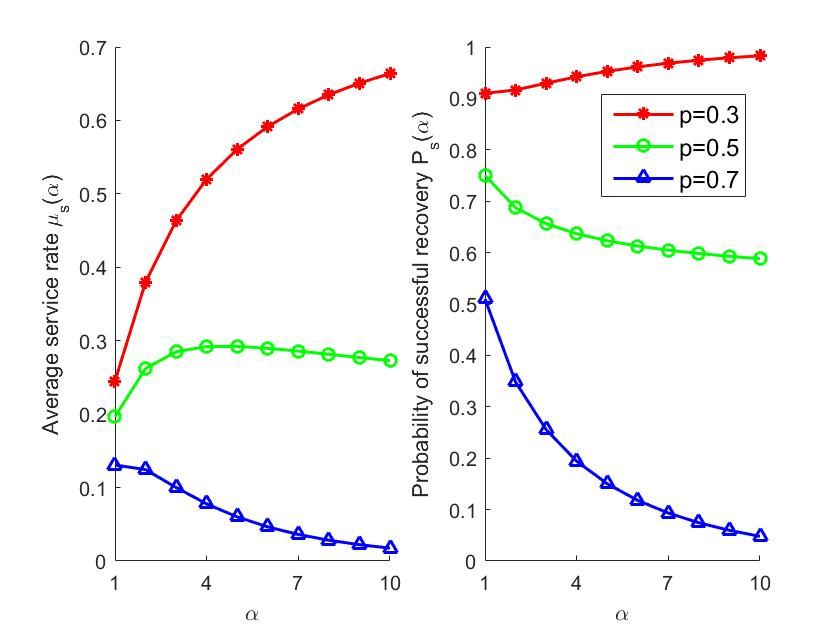}
    \caption{The average service rate and probability of successful recovery of probabilistic access model are changing with the allocation parameter $\alpha$ under different $p$ when $m=2$ and the distribution is scaled exponential with $\mu=1$ and $\Delta=3$.}
    \label{fig:SProb2}
\end{figure}

Here we come to the same conclusion as in the fixed-size access model. The following claim helps understand why the optimal $\alpha$ varies (for proof  see the proof of Claim~\ref{explainfix}):

\begin{claim}
For the probabilistic access model with failure probability for each node $p=0$, if we use scaled exponential distribution, we have $\mu_s(\alpha)$<$\mu_s(\alpha+1)$; If we use shifted exponential distribution, the result varies based on different parameters' values.
\label{explainprob}
\end{claim}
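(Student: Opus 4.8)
The plan is to reduce Claim~\ref{explainprob} to the already-proved Claim~\ref{explainfix} by noting that the hypothesis $p=0$ degenerates the probabilistic access model in exactly the way $r=N$ degenerates the fixed-size model. When $p=0$ no request to a data-bearing node can fail, so all $\alpha m$ nodes holding data are accessed and $\varphi(\mathcal{A})=\alpha m$ with probability one. The sum in Claim~\ref{cor:prob} (resp.\ Claim~\ref{cor:probs}) then collapses to its single term at $\varphi=\alpha m$, giving $\mu_s(\alpha)=\mu_\alpha(\mathcal{A})\big|_{\varphi=\alpha m}=1/T_s(\alpha\mid\alpha m)$. The claim thus reduces to comparing the two scalars $T_s(\alpha\mid\alpha m)$ and $T_s(\alpha+1\mid(\alpha+1)m)$, which are exactly the quantities compared in the proof of Claim~\ref{explainfix}; in fact that argument never used any feature of the fixed-size model beyond $\varphi=\alpha m$, so it transfers verbatim.

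For the scaled exponential service I would show $\mu_s(\alpha)<\mu_s(\alpha+1)$ by proving $T_s(\alpha\mid\alpha m)>T_s(\alpha+1\mid(\alpha+1)m)$. Starting from $T_s(\alpha\mid\alpha m)=\tfrac{1}{\alpha\mu}\sum_{i=1}^{\alpha}\tfrac{1}{\alpha m-\alpha+i}$ (cf.~\eqref{eq:time}), I would factor $\tfrac1\alpha=\tfrac{1}{\alpha+1}\bigl(1+\tfrac1\alpha\bigr)$ to expose the prefactor $\tfrac{1}{(\alpha+1)\mu}$ that also governs $T_s(\alpha+1\mid(\alpha+1)m)$, bound the extra $\tfrac1\alpha$-weighted mass below by $\tfrac{1}{\alpha m}$, and then compare the resulting $\alpha+1$ terms against $\sum_{i=0}^{\alpha}\tfrac{1}{\alpha m+m-\alpha+i}=H_{(\alpha+1)m}-H_{(\alpha+1)(m-1)}$. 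After ordering, each paired denominator on the right exceeds its partner on the left by at least $m-1\ge0$, and the last pair gives $(\alpha+1)m>\alpha m$ strictly; summing yields the strict inequality. This reproduces the scaled case of Claim~\ref{explainfix}.

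For the shifted exponential service the claim asserts only that the direction depends on the parameters, so I would exhibit two regimes. Here $T_s(\alpha\mid\alpha m)=\tfrac{\Delta}{\alpha}+\tfrac1\mu\bigl(H_{\alpha m}-H_{\alpha m-\alpha}\bigr)$ (cf.~\eqref{eq:shifttime}). When $\Delta\mu\gg1$ the shift $\Delta/\alpha$ dominates; since it decreases in $\alpha$, we get $T_s(\alpha\mid\alpha m)>T_s(\alpha+1\mid(\alpha+1)m)$ and hence $\mu_s(\alpha)<\mu_s(\alpha+1)$. When $\Delta\mu\ll1$ the harmonic term dominates, and its behavior depends on $m$: for $m=1$ one has $H_{\alpha m}-H_{\alpha m-\alpha}=H_\alpha$, which increases in $\alpha$, so the inequality reverses. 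These two regimes establish that no uniform ordering holds.

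The reduction and the shifted dichotomy are immediate, so the only place requiring care is the strictness in the scaled case. The lower bound $\tfrac1\alpha\sum_{i=1}^{\alpha}\tfrac{1}{\alpha m-\alpha+i}\ge\tfrac{1}{\alpha m}$ is tight precisely at $\alpha=1$; there I would lean on the final denominator comparison, where $(\alpha+1)m>\alpha m$ keeps the composite chain strict for every $m\ge1$. Confirming this boundary case is what secures $\mu_s(\alpha)<\mu_s(\alpha+1)$ for all admissible $\alpha\ge1$ rather than merely $\alpha\ge2$.
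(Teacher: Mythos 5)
Your proposal is correct and follows essentially the same route as the paper: the paper also proves this claim by pointing to the proof of Claim~\ref{explainfix}, whose argument uses only that $\varphi(\mathcal{A})=\alpha m$ deterministically (which $p=0$ guarantees), and your chain of inequalities for the scaled case and your $\Delta\mu\gg 1$ versus $\Delta\mu\ll 1$, $m=1$ dichotomy for the shifted case match the paper's. Your extra care about strictness at $\alpha=1$ is a welcome refinement the paper glosses over, but it does not change the approach.
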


We can see that patterns described by the Conjecture \ref{hyp:fix1} and \ref{hyp:prob1} hold in all the figures. And we can find another pattern according to the coded file size ratio $m$ in Conjecture \ref{hyp:redun}.

\begin{conjecture}
For both fixed-size access and probabilistic access models, when $m$ is increasing, the optimal $\alpha$ for $\mu_s(\alpha)$ is also increasing.
\label{hyp:redun}
\end{conjecture}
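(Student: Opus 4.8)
The statement is a monotone--comparative--statics claim: writing the service rate as a function $\mu_s(\alpha;m)$ of the choice variable $\alpha$ and the parameter $m$, we must show the maximizer $\alpha^\ast(m)=\arg\max_\alpha \mu_s(\alpha;m)$ is nondecreasing in $m$. The natural tool is \emph{Topkis's monotonicity theorem} (equivalently the Milgrom--Shannon single--crossing condition): on the integer lattice it suffices to prove that $\mu_s(\alpha;m)$ has \emph{increasing differences} in $(\alpha,m)$, i.e.
\[
\mu_s(\alpha+1;m+1)-\mu_s(\alpha;m+1)\ \ge\ \mu_s(\alpha+1;m)-\mu_s(\alpha;m)
\]
for all admissible $\alpha,m$. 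Equivalently, setting $D(\alpha,m):=\mu_s(\alpha+1;m)-\mu_s(\alpha;m)$, it suffices that $D(\alpha,m)$ be nondecreasing in $m$ for each fixed $\alpha$. If one additionally verifies that $\mu_s(\cdot;m)$ is unimodal in $\alpha$ (strongly suggested by the parabola--shaped curves in Figs.~\ref{fig:Fixed1}--\ref{fig:SProb2}), the maximizer is pinned down by the sign change of $D$, and the increasing--differences property then forces that sign change to occur at a larger $\alpha$ as $m$ grows.

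To analyze $D$, I would exploit the common probabilistic structure of all four claims. Under an $\alpha$--allocation exactly $\alpha m$ nodes carry data, so $\varphi(\mathcal A)$ is Binomial$(\alpha m,1-p)$ in the probabilistic model and Hypergeometric$(N,\alpha m,r)$ in the fixed--size model; in either case $\mu_s(\alpha;m)=\mathbb E\!\left[g_\alpha(\varphi)\,\mathbb{1}\{\varphi\ge\alpha\}\right]$ with per--set rate $g_\alpha(\varphi)=\alpha\mu/(H_\varphi-H_{\varphi-\alpha})$ from \eqref{eq:time} (and the shifted analogue from \eqref{eq:SetServiceRateshift}). The key observation is that passing from $m$ to $m+1$ adds exactly $\alpha$ data--bearing nodes, so $\varphi$ under $m+1$ stochastically dominates $\varphi$ under $m$; a direct coupling realizes this. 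Thus ``$D(\alpha,m)$ nondecreasing in $m$'' reduces to a statement about how the \emph{marginal} gain $g_{\alpha+1}(\varphi)-g_\alpha(\varphi)$ of spreading one step further interacts with a stochastically larger $\varphi$ --- intuitively, extra parallelism is worth more precisely when more nodes respond, which is what larger $m$ supplies.

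The boundary cases are already settled and serve as anchors: Claims~\ref{explainfix} and \ref{explainprob} show that when recovery is certain ($r=N$ or $p=0$) the scaled--exponential rate is strictly increasing in $\alpha$, so $\alpha^\ast$ sits at its maximal value, and the conjecture asserts that the argmax moves monotonically toward this extreme as redundancy $m$ increases. To make the middle regime tractable I would sandwich $g_\alpha$ using the linear bounds \eqref{ieq:rate}, namely $\mu(\varphi-\alpha+1)\le g_\alpha(\varphi)\le\mu\varphi$ (and \eqref{ieq:rateshift} in the shifted case), which turn the increasing--difference inequality into a comparison of low--order moments of $\varphi$ --- quantities that are explicit for the Binomial and Hypergeometric laws and whose dependence on the trial count $\alpha m$ is easy to differentiate.

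\textbf{The main obstacle.} Changing $\alpha$ simultaneously alters the integrand $g_\alpha$, the support $[\alpha,\alpha m]$, and the number of trials $\alpha m$; these three effects are entangled, so the cross--difference is \emph{not} a clean second difference of a fixed kernel against a fixed measure. Moreover, the linear sandwich from \eqref{ieq:rate} is tight only at the endpoints $\varphi=\alpha$ and $\varphi\approx\alpha m$, whereas the optimal $\alpha$ typically lives in the interior where the bounds are loosest --- exactly the ``undecided'' gap the authors report between their two $r$-- (resp.\ $p$--) regions. Closing that gap will likely require replacing the crude linear bounds by the sharper two--sided estimate $\log\frac{\varphi+1}{\varphi-\alpha+1}<H_\varphi-H_{\varphi-\alpha}<\log\frac{\varphi}{\varphi-\alpha}$, and then establishing a log--supermodularity property of the resulting kernel in $(\alpha,\varphi)$ together with a total--positivity property of the Binomial/Hypergeometric weights so that the stochastic ordering propagates through the coupling --- the step I expect to be the crux.
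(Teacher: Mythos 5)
First, a point of calibration: the statement you are trying to prove is stated in the paper as a \emph{conjecture}, explicitly left for future work and supported only by the numerical evidence in Figs.~\ref{fig:Fixed1}--\ref{fig:SProb2}. There is no proof in the paper to compare against, so the only question is whether your proposal actually closes the problem. It does not: it is a research plan whose decisive step you yourself label as open. Reducing the conjecture to increasing differences of $\mu_s(\alpha;m)$ in $(\alpha,m)$ via Topkis/Milgrom--Shannon is a sensible framing, but nothing in the proposal establishes that cross-difference inequality; the linear sandwich \eqref{ieq:rate} is, as you note, tight only at the endpoints of the support of $\varphi$, and the paper's own Theorems~\ref{Le:fixed}--\ref{Le:shift-prob} already demonstrate that these bounds leave a large undecided gap even for the much easier question of comparing $\mu_s(\alpha)$ with $\mu_s(1)$ at fixed $m$. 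The proposed sharper log-bounds on $H_\varphi-H_{\varphi-\alpha}$ and the log-supermodularity/total-positivity program are plausible directions, but they are stated as hopes, not verified, so the ``proof'' has no content beyond the reduction.

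Two further structural problems deserve mention. First, your argument presumes unimodality of $\mu_s(\cdot;m)$ in $\alpha$, which is only suggested by the figures and never proved; without it, increasing differences does not pin the argmax to a single sign change of $D(\alpha,m)$. Second, and more seriously for the probabilistic model, the feasible range of $\alpha$ is $1\le\alpha\le N/m$ (Sec.~\ref{sec:AccessModel}), so the constraint set \emph{shrinks} as $m$ grows; Topkis's theorem requires the feasible correspondence to be ascending in the parameter, and this failure means the lattice machinery cannot be applied off the shelf even if increasing differences were established. Your coupling observation (that moving from $m$ to $m+1$ at fixed $\alpha$ adds $\alpha$ data-bearing nodes and yields stochastic dominance of $\varphi$) is correct and is probably a useful ingredient, but the cross-difference perturbs $\alpha$ and $m$ simultaneously, entangling the kernel $g_\alpha$, the support $[\alpha,\alpha m]$, and the trial count, exactly as you concede. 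In short: the statement remains a conjecture after your proposal, and any honest write-up should present this as a proof strategy with identified obstacles rather than a proof.
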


\bibliographystyle{IEEEtran}
\bibliography{distributed}

\begin{thebibliography}{10}
\providecommand{\url}[1]{#1}
\csname url@samestyle\endcsname
\providecommand{\newblock}{\relax}
\providecommand{\bibinfo}[2]{#2}
\providecommand{\BIBentrySTDinterwordspacing}{\spaceskip=0pt\relax}
\providecommand{\BIBentryALTinterwordstretchfactor}{4}
\providecommand{\BIBentryALTinterwordspacing}{\spaceskip=\fontdimen2\font plus
\BIBentryALTinterwordstretchfactor\fontdimen3\font minus
  \fontdimen4\font\relax}
\providecommand{\BIBforeignlanguage}[2]{{%
\expandafter\ifx\csname l@#1\endcsname\relax
\typeout{** WARNING: IEEEtran.bst: No hyphenation pattern has been}%
\typeout{** loaded for the language `#1'. Using the pattern for}%
\typeout{** the default language instead.}%
\else
\language=\csname l@#1\endcsname
\fi
#2}}
\providecommand{\BIBdecl}{\relax}
\BIBdecl

\bibitem{joshi2012coding}
G.~Joshi, Y.~Liu, and E.~Soljanin, ``Coding for fast content download,'' in
  \emph{Communication, Control, and Computing (Allerton), 2012 50th Annual
  Allerton Conference on}, 2012, pp. 326--333.

\bibitem{chen2014queueing}
S.~Chen, Y.~Sun, U.~C. Kozat, L.~Huang, P.~Sinha, G.~Liang, X.~Liu, and N.~B.
  Shroff, ``When queueing meets coding: Optimal-latency data retrieving scheme
  in storage clouds,'' in \emph{IEEE Conf. on Computer Communications
  (INFOCOM)}, 2014, pp. 1042--1050.

\bibitem{tandon2014new}
R.~Tandon and S.~Mohajer, ``New bounds for distributed storage systems with
  secure repair,'' in \emph{Allerton Conf. on Communication, Control, and
  Computing}, 2014, pp. 431--436.

\bibitem{kadhe15availability}
S.~Kadhe, E.~Soljanin, and A.~Sprintson, ``Analyzing download time for
  availability codes,'' in \emph{Information Theory Proceedings (ISIT), 2015
  IEEE International Symposium on}, July 2015.

\bibitem{noori2016storage}
M.~Noori, E.~Soljanin, and M.~Ardakani, ``On storage allocation for maximum
  service rate in distributed storage systems,'' in \emph{2016 IEEE Internat.\
  Symp.\ on Inform.\ Theory (ISIT)}, 2016, pp. 240--244.

\bibitem{ServiceRate:AktasAJJKMMS}
M.~Aktas, S.~E. Anderson, A.~Johnston, G.~Joshi, S.~Kadhe, G.~L. Matthews,
  C.~Mayer, and E.~Soljanin, ``On the service capacity region of accessing
  erasure coded content,'' \emph{Allerton Conf. on Communication, Control, and
  Computing}, 2017.

\bibitem{leong2011distributed}
D.~Leong, A.~G. Dimakis, and T.~Ho, ``Distributed storage allocations for
  optimal delay,'' in \emph{IEEE Intl. Symp. on Information Theory (ISIT)},
  2011, pp. 1447--1451.

\bibitem{leong2012distributed}
------, ``Distributed storage allocations,'' \emph{{IEEE} Trans. Information
  Theory}, vol.~58, no.~7, pp. 4733--4752, 2012.

\bibitem{Sardari_Allocation_2010}
M.~Sardari, R.~Restrepo, F.~Fekri, and E.~Soljanin, ``Memory allocation in
  distributed storage networks,'' in \emph{IEEE Intl. Symp. on Information
  Theory (ISIT)}, June 2010, pp. 1958--1962.

\bibitem{hong2014asymptotic}
B.~Hong and W.~Choi, ``Asymptotic analysis of failed recovery probability in a
  distributed wireless storage system with limited sum storage capacity,'' in
  \emph{IEEE Intl. Conf. on Acoustics, Speech, and Signal Processing (ICASSP)},
  2014, pp. 6459--6463.

\bibitem{matching:AlonFHRRS12}
N.~Alon, P.~Frankl, H.~Huang, V.~R{\"{o}}dl, A.~Rucinski, and B.~Sudakov,
  ``Large matchings in uniform hypergraphs and the conjectures of erd{\H{o}}s
  and samuels,'' \emph{J. Comb. Theory, Ser. {A}}, vol. 119, pp. 1200--1215,
  2012.

\end{thebibliography}

\end{document}